\documentclass[a4paper]{article}
\usepackage[utf8]{inputenc}
\usepackage{fullpage}
\usepackage{amsthm}
\usepackage{amsfonts}
\usepackage{amsmath}
\usepackage{thm-restate}
\usepackage{todonotes}
\usepackage[hypertexnames=false,colorlinks=true,urlcolor=blue,citecolor=green,linkcolor=red]{hyperref}
\usepackage{cleveref}

\newtheorem{theorem}{Theorem}[section]
\newtheorem{definition}[theorem]{Definition}

\newtheorem{observation}[theorem]{Observation}
\newtheorem{corollary}[theorem]{Corollary}
\newtheorem{lemma}[theorem]{Lemma}
\newtheorem{claim}[theorem]{Claim}

\title{Optimal Vertex-Cut Sparsification of Quasi-Bipartite Graphs}

\iffalse
\author[1]{Itai Boneh \thanks{Supported by Israel Science Foundation grant \#1475/18}}
\author[2]{Robert Krauthgamer \thanks{Supported in part by ONR Award N00014-18-1-2364, Israel Science Foundation grant \#1086/18, the Weizmann Data Science Research Center, and the Minerva Foundation. }}
  \affil[1]{Bar-Ilan University, Israel \\ \href{mailto:itai.bone@biu.ac.il}{itai.bone@biu.ac.il} }
  \affil[2]{Weizmann Institute of Science, Israel \\
  \href{mailto:robert.krauthgamer@weizmann.ac.il}{robert.krauthgamer@weizmann.ac.il}}
\else
\author{Itai Boneh%
  \thanks{Supported by Israel Science Foundation grant \#1475/18}
  \\ Bar-Ilan University
  \\ \texttt{itai.bone@biu.ac.il}
  \and
  Robert Krauthgamer%
  \thanks{Supported in part by ONR Award N00014-18-1-2364, Israel Science Foundation grant \#1086/18, the Weizmann Data Science Research Center, and the Minerva Foundation. }
  \\ Weizmann Institute of Science
  \\ \texttt{robert.krauthgamer@weizmann.ac.il}
}
\fi

\def\NN{{\mathbb{N}}}
\DeclareMathOperator{\mincut}{mincut}
\DeclareMathOperator{\poly}{poly}
\newcommand{\set}[1]{\{#1\}}
\newcommand{\dd}{\mathinner{.\,.}}
\newcommand{\minn}[1]{\min\{{#1}\}}

\newcommand{\NewSet}{M^+}
\newcommand{\OldSet}{M^-}

\begin{document}

\maketitle

\begin{abstract}
In vertex-cut sparsification,
given a graph $G=(V,E)$ with a terminal set $T\subseteq V$, 
we wish to construct a graph $G'=(V',E')$ with $T\subseteq V'$,
such that for every two sets of terminals $A,B\subseteq T$,
the size of a minimum $(A,B)$-vertex-cut in $G'$ is the same as in $G$.
In the most basic setting, $G$ is unweighted and undirected, 
and we wish to bound the size of $G'$ by a function of $k=|T|$.
Kratsch and Wahlström [JACM 2020] proved that every graph $G$ (possibly directed),
admits a vertex-cut sparsifier $G'$ with $O(k^3)$ vertices,
which can in fact be constructed in randomized polynomial time.

We study (possibly directed) graphs $G$ that are quasi-bipartite,
i.e., every edge has at least one endpoint in $T$,
and prove that they admit a vertex-cut sparsifier with $O(k^2)$ edges and vertices,
which can in fact be constructed in deterministic polynomial time. 
In fact, this bound naturally extends to all graphs with a small separator
into bounded-size sets.
Finally, we prove information-theoretically a nearly-matching lower bound,
i.e., that $\tilde{\Omega}(k^2)$ edges are required to sparsify
quasi-bipartite undirected graphs.
\end{abstract}

\newpage

\setcounter{page}{1}
\section{Introduction}

Vertex sparsification is a genre of problems,
where given a graph $G=(V,E)$ and a set of vertices $T\subseteq V$ called \emph{terminals},
the goal is to find a small graph $G'=(V',E')$, called a \emph{sparsifier},
that includes the terminals (i.e., $T\subseteq V'$)
and maintains certain properties that the terminals have in $G$. 
Usually, one aims at sparsifier size that is bounded by a function of $k=|T|$,
e.g., $|V'| \leq \poly(k)$.
Several properties have been studied in this context,
including distances between every two terminals
\cite{Gupta01, KNZ14}, %
minimum edge cuts (between two terminals or two sets of terminals) 
\cite{GH61, HKNR98, ChalermsookDKLL21}, %
minimum vertex cuts \cite{KW20,HeLW21}, %
multicommodity flow
\cite{Moitra09,Chuzhoy12,AGK14}, %
and effective resistances \cite{DGGP19}
(we provide here only a few example references, a comprehensive list would be excessive). 

Sparsification is a natural method to compress a graph,
in the sense of reducing the size of its representation,
which can be very effective when storing or communicating it. 
Computing a sparsifier can also be used as a preprocessing step
before executing some algorithm;
the idea is that reducing the input size
will decrease the running time of the ``main'' algorithm,
and this further requires a fast construction of the sparsifier. 
The study of vertex sparsification can be divided roughly into two challenges: 
\emph{Combinatorially}, do sparsifiers of certain size exist at all, 
e.g., can the size bound depend only on $k$ and if so, what is the best such bound (for all graphs $G$ or for a family of graphs)? 
And \emph{computationally},
how fast can one construct a sparsifier for an input graph $G$?

A well-known example is a \emph{mimicking network},
which in the above language is a vertex sparsifier
that maintains \emph{exactly} the minimum edge cuts between every two sets of terminals.%
\footnote{There is also a long line of work on sparsifiers that maintain
  these minimum edge cuts \emph{approximately}, see e.g.~\cite{Moitra09,Chuzhoy12,AGK14}.
}
It was introduced by Hagerup, Katajainen, Nishimura and Ragde \cite{HKNR98},
who provided a sparsifier construction with $2^{2^k}$ vertices.
Their upper bound was slightly improved by Khan and Raghavendra~\cite{KR14}. 
Better mimicking networks, i.e., sparsifiers of smaller size, were constructed 
for graphs with bounded treewidth \cite{CSWZ00} %
and for planar graphs \cite{KR13,KR20}, %
and some lower bounds are also known \cite{KR13,KR14,KPZ17}.

We study a different but related notion
of sparsification that maintains \emph{minimum vertex cuts},
so let us recall its basic terminology. 
A vertex cut in $G$ between two sets of vertices $A,B \subseteq V$,
also called an $(A,B)$-vertex cut,
is a set of vertices $C \subseteq V$ whose removal from $G$
eliminates all paths from $A$ to $B$.
Note that $C$ may intersect $A\cup B$,
in fact our definition above allows $A$ and $B$ to intersect, 
and then clearly $A\cap B\subseteq C$.
A \emph{vertex-cut sparsifier} is a graph $G'$ that maintains,
for every two subsets of terminals $A,B\subseteq T$,
the minimum size of an $(A,B)$-vertex-cut in $G$. 
Observe that these definitions extend immediately to directed graphs.

The state-of-the-art solution for vertex-cut sparsification of a general digraph $G$ 
is a randomized algorithm of Kratsch and Wahlström~\cite{KW20} that,
given a digraph with $k$ terminals,
constructs in polynomial time a vertex-cut sparsifier $G'$ with $O(k^3)$ vertices.
They devised a powerful new technique
of iteratively removing an ``irrelevant'' vertex in the graph,
which guarantees that the removal does not affect any minimum vertex cut.
The irrelevant vertex is identified by
computing some $O(k^3)$-size set, and showing that every relevant vertex must correspond to a distinct element in that set.
The main innovation in their solution is finding the $O(k^3)$-size set using tools from matroid theory.
They also provided a lower bound by presenting (a family of) directed graphs
with $k$ terminals,
for which every sparsifier must have $\Omega(k^2)$ vertices.
For the special case of directed acyclic graphs, 
the upper bound $O(k^3)$ was recently improved to $O(k^2)$ vertices
by He, Li, and Wahlström~\cite{HeLW21},
using the techniques of \cite{KW20} and additional ideas. 
They also proved that $\Omega(k^2)$ vertices are required to sparsify
directed acyclic graphs.

This sparsification of Kratsch and Wahlström~\cite{KW20}
was motivated by kernelization, an important notion in parameterized complexity,
where an input is preprocessed in polynomial-time to reduce its size
while maintaining the optimal value of some optimization problem,
e.g, some cut problem. 
Indeed, some of their main results are kernels of polynomial size 
(i.e., polynomial in the number of terminals $k$) for several problems.
The sparsification results of \cite{KW20} have strong implications
for two other vertex-sparsification problems:
One is sparsification of unweighted graphs
that maintains the minimum edge cuts between every two sets of terminals. 
Chuzhoy~\cite{Chuzhoy12} designed such sparsifiers
that maintain these minimum edge cuts within factor $O(1)$
and have size $O(Z^3)$, %
where $Z$ is the sum of degrees of the terminals in the input graph.
A simple application of~\cite{KW20} yields sparsifiers 
that maintain the minimum edge cuts exactly,
and moreover it improves the sparsifier's construction time
(from exponential to polynomial in $Z$).
The sparsification results of~\cite{KW20} were used also for sparsifiers
that maintain the minimum edge cuts between every two sets of terminals
up to threshold $c$ \cite{ChalermsookDKLL21,abs-2011-15101}. 
Additionally, the techniques developed in~\cite{KW20} 
have been utilized to obtain kernels for other NP-hard problems,
see e.g.~\cite{DBLP:journals/corr/HolsK15,DBLP:journals/siamdm/Kratsch18}.

\subsection{Our Results}

We study vertex-cut sparsifiers for graphs that are \emph{quasi-bipartite},
meaning that every edge has at least one of its endpoints in $T$.
We design three sparsifier constructions,
all presented in Section \ref{sec:sparsifier}.
Our first and main result is that quasi-bipartite graphs with $k$ terminals
admit vertex-cut sparsifiers with $O(k^2)$ edges and vertices;
moreover, these sparsifiers can be constructed efficiently.
Our construction does not rely on matroids and representative sets,
thus offering new insights and more elementary techniques
for constructing vertex-cut sparsification.

\begin{theorem}\label{t:qbsparsifier}
Every quasi-bipartite directed graph $G=(V,E)$ with $k$ terminals 
admits a vertex-cut sparsifier $G'=(V',E')$ with $|E'| = O(k^2)$. 
Furthermore, given $G$ and $T$, 
such a sparsifier can be computed in deterministic polynomial time. 
\end{theorem}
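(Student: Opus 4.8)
The plan is to start from $G$ and repeatedly apply local modifications, each running in polynomial time and each provably preserving $\mincut_G(A,B)$ for \emph{every} pair $A,B \subseteq T$ (so the graph stays a vertex-cut sparsifier of the original $G$), until the number of edges has dropped to $O(k^2)$. Write $S := V \setminus T$ for the set of non-terminals; since $G$ is quasi-bipartite, $S$ is an independent set, so each $s \in S$ is completely determined by the pair $(A_s, B_s) := (N^-(s), N^+(s)) \subseteq T \times T$ (in the undirected case, $A_s = B_s = N(s)$). As the number of edges with both endpoints in $T$ is always less than $k^2$, it suffices to arrange that the non-terminals have total degree $O(k^2)$, that is, that $\sum_{s \in S}(|A_s| + |B_s|) = O(k^2)$; each modification below will moreover strictly decrease $|S|$, which guarantees termination. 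The first modification eliminates \emph{thin} non-terminals: if $|A_s| = 1$, say $A_s = \{a\}$, delete $s$ and add all edges $a \to b$ with $b \in B_s$. This preserves all minimum vertex cuts --- a path through $s$ (necessarily of the form $a \to s \to b$) is absorbed by the new edge $a \to b$, and a minimum $(A,B)$-vertex-cut through $s$ can be rerouted, with no increase in size, to use $a$ instead of $s$, because deleting $a$ kills a superset of the paths killed by deleting $s$ --- and symmetrically when $|B_s| = 1$. Applied exhaustively, this leaves every surviving non-terminal with $|A_s| \ge 2$ and $|B_s| \ge 2$, and the only new edges lie inside $T$.

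The second modification deletes \emph{irrelevant} non-terminals, where $s$ is irrelevant if it lies in no minimum $(A,B)$-vertex-cut, over all $A,B \subseteq T$. This is safe because, for any fixed $A,B$, one has $\mincut_{G-s}(A,B) < \mincut_G(A,B)$ exactly when $s$ belongs to some minimum $(A,B)$-vertex-cut, and $\mincut_{G-s}(A,B) = \mincut_G(A,B)$ otherwise. The heart of the argument --- and the step I expect to be the main obstacle --- is to bound the number of edges incident to the non-terminals that survive both modifications. Let $s$ be such a non-terminal: then $|A_s|, |B_s| \ge 2$, and $s$ lies in some minimum $(A,B)$-vertex-cut $C$, which is therefore minimal. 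Let $R$ be the set of vertices reachable from $A$ in $G - C$. Then: no edge goes from $R$ into $V \setminus (R \cup C)$; the vertex $s$ lies in $N^+(R)$, so in particular $A_s \cap R \ne \emptyset$; one has $B_s \cap (V \setminus (R \cup C)) \ne \emptyset$; every path from $A$ to $B$ in $G - (C \setminus \{s\})$ must pass through $s$, entering it from a vertex of $R$ and leaving it for a vertex of $V \setminus (R \cup C)$; and $C$ is also a minimum vertex-cut between the canonical pair $A' := T \cap (R \cup C)$ and $B' := T \cap (V \setminus R)$. The plan is then to charge $s$, together with all edges incident to $s$, against this cut data, and to show that the total charge over all surviving non-terminals is $O(k^2)$. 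Carrying this out --- rather than getting only the $O(k^3)$ that falls out naively --- is the technical crux, and it is here that quasi-bipartiteness is used in full.

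Finally, all the operations above run in deterministic polynomial time: thin non-terminals are recognized syntactically, and whether a given non-terminal lies in some minimum vertex-cut reduces to polynomially many maximum-flow computations. Since each modification strictly decreases $|S|$, there are at most $|V|$ rounds, and we obtain a vertex-cut sparsifier $G' = (V', E')$ with $|E'| = O(k^2)$ in deterministic polynomial time, as claimed.
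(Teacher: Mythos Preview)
Your proposal has a genuine gap at exactly the point you yourself flag as ``the technical crux'': you never prove that after removing thin and irrelevant non-terminals, the surviving non-terminals have total degree $O(k^2)$. Everything before that step is routine, and everything after it is bookkeeping; the theorem lives entirely in that missing argument. Saying ``quasi-bipartiteness is used in full'' is not a proof, and the charging scheme you sketch (charge $s$ and its incident edges to the canonical pair $(A',B')$ coming from a minimum cut containing $s$) does not obviously beat $O(k^3)$: relevance of $s$ tells you only that $s$ sits in \emph{some} minimum cut, which gives one in-neighbour in the source side and one out-neighbour in the sink side, but places no bound on $|A_s|+|B_s|$. It is not even clear that the \emph{number} of relevant non-thin non-terminals is $O(k^2)$, let alone their total degree; the Kratsch--Wahlstr\"om irrelevant-vertex paradigm gives $O(k^3)$ in general, and you have not shown what breaks that barrier here. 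A secondary issue: your claim that irrelevance of a single non-terminal ``reduces to polynomially many maximum-flow computations'' skips a real argument, since there are exponentially many pairs $(A,B)\subseteq T\times T$ to quantify over.

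For contrast, the paper's proof takes a completely different route that avoids the relevant/irrelevant dichotomy altogether. It builds an auxiliary bipartite \emph{link graph} $\mathcal L_G$ with one side $T\times T$ and the other side $E$, where $e=\{a,v\}$ is joined to $(a,b)$ whenever $\{a,v\},\{v,b\}\in E$; it then takes a maximum matching $M$ in $\mathcal L_G$ and keeps, for each matched pair $((a,b),\{a,v\})$, just the two edges $\{a,v\},\{v,b\}$. This immediately gives $|E'|\le 2|M|\le 2k^2$, and correctness is proved by an augmenting-path argument: if the subgraph failed to be a sparsifier, one could extract from a bad cut an alternating path in $\mathcal L_G$ that contradicts maximality of $M$. (For directed $G$ one runs the same construction twice, once for ``out-links'' and once for ``in-links''.) The point is that the paper selects \emph{edges}, two per terminal pair, rather than trying to keep whole non-terminals with all their incident edges; this is what makes the $O(k^2)$ bound fall out for free, and it is exactly the idea your proposal is missing.
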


Observe that a graph $G$ with terminals $T$ is quasi-bipartite
if and only if deleting the terminals from the graph
leaves only isolated vertices, i.e., all connected components have size $1$.
From this viewpoint, it is natural to generalize our result to inputs $(G,T)$
in which all connected components of $G\setminus T$ have bounded size,
as follows.

\begin{theorem}\label{t:qbsparsifiergen}
Every directed graph $G=(V,E)$ with $k$ terminal vertices $T \subseteq V$
admits a vertex-cut sparsifier $G'=(V',E')$ with size bounds 
$|V'| = O(c k^2)$ and $|E'| = O((c k)^2)$,
where $c=c_G$ is the maximum number of vertices in a connected component of $G\setminus T$. 
Furthermore, given $G$ and $T$, 
such a sparsifier can be computed in deterministic polynomial time. 
\end{theorem}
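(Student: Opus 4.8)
The plan is to obtain \Cref{t:qbsparsifiergen} by lifting the construction and analysis behind \Cref{t:qbsparsifier} (the case $c=1$) from single non-terminal vertices to connected components of $G\setminus T$ of size at most $c$, which now play the role of ``fat'' non-terminals. Throughout, call an object $X$ contained in $V\setminus T$ --- a component, or an edge incident to a component --- \emph{irrelevant} if deleting $X$ together with its incident edges changes no value $\mincut_G(A,B)$ for $A,B\subseteq T$; deleting an irrelevant object therefore yields an equivalent instance, and we may iterate such deletions.

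First I would prune components: iteratively delete an irrelevant connected component of $G\setminus T$ until none remains, obtaining an equivalent instance $G_1$. The first key claim is that $G_1\setminus T$ has only $O(k^2)$ components. To prove it I would revisit the counting step in the proof of \Cref{t:qbsparsifier} and observe that it actually bounds the number of relevant \emph{components} rather than relevant vertices: whatever certificate that argument assigns to a relevant non-terminal --- a linkage/gammoid element, or a charge to a bounded family of minimum cuts obtained by uncrossing, or the like --- depends only on the component containing the vertex, so it produces at most $O(k^2)$ distinct certificates, one per surviving component. Since every component has at most $c$ vertices, $G_1$ has $k+O(ck^2)=O(ck^2)$ vertices, which already matches the claimed bound on $|V'|$.

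Next I would prune edges: iteratively delete an irrelevant edge of $G_1$ until none remains, obtaining $G'$. Internal edges of the $O(k^2)$ surviving components contribute at most $c^2$ each, hence $O((ck)^2)$ in total, which is within budget; the real work is bounding the number of surviving edges between $T$ and the components, since a priori one relevant component can be adjacent to all $k$ terminals. Here I would run the edge-pruning argument of \Cref{t:qbsparsifier} on the graph in which each surviving component of size $j\le c$ is split into $j$ terminal-like attachment points, invoking the quasi-bipartite edge bound with $O(ck)$ effective terminals to conclude that $O((ck)^2)$ of these edges survive, so $|E'|=O((ck)^2)$. Both pruning steps delete only irrelevant objects, so $G'$ preserves every value $\mincut_G(A,B)$ exactly and is a vertex-cut sparsifier. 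For the running time, deciding whether a given component or edge is irrelevant reduces to polynomially many minimum $(A,B)$-vertex-cut computations --- using, as in \Cref{t:qbsparsifier}, that it suffices to consider a polynomial family of source/sink pairs rather than all $2^{O(k)}$ choices of $(A,B)$ --- so the whole procedure is deterministic polynomial time.

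The main obstacle is the edge bound: the vertex/component count is an essentially mechanical upgrade of the $c=1$ argument, but controlling how a fat non-terminal of size up to $c$ can participate in minimum vertex cuts --- in particular, ruling out that it must retain $\Theta(k)$ incident edges --- is where the structural heart of \Cref{t:qbsparsifier} has to be re-examined, and is where nearly all of the effort goes. A secondary subtlety is making sure the charging in the first pruning step is genuinely one certificate \emph{per component} and not one per vertex, as this is precisely what saves the factor $c$ in the bound on $|V'|$ relative to a naive black-box reduction that would simply promote every component vertex to a terminal and invoke \Cref{t:qbsparsifier}.
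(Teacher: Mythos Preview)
Your plan diverges from the paper in its basic mechanism, and several steps do not go through.

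First, you describe the proof of \Cref{t:qbsparsifier} as an irrelevant-object deletion procedure with ``certificates'' (linkage/gammoid elements, uncrossed cuts, etc.). That is not what the paper does. The construction is explicit and constructive: build the link graph $\mathcal{L}_G$, compute a maximum matching $M$ in it, and keep exactly the edges (and their endpoints) induced by matched pairs. The $O(k^2)$ bound comes from $|M|\le |T\times T|$, and correctness (Lemma~\ref{lem:sparsifierSimple}) is by an augmenting-path contradiction in $\mathcal{L}_G$, not by any per-vertex certificate or irrelevance test. So there is no ``counting step'' to revisit in the form you assume, and no place in the paper where irrelevance of a single object is certified in polynomial time.

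Second, and relatedly, your running-time claim is unsupported. You assert that irrelevance of a component or edge ``reduces to polynomially many minimum $(A,B)$-vertex-cut computations --- using, as in \Cref{t:qbsparsifier}, that it suffices to consider a polynomial family of source/sink pairs''. \Cref{t:qbsparsifier} proves no such reduction; the discussion after Definition~\ref{def:vertexcutsparsifier} still leaves exponentially many $(A,D)$ pairs. Without an independent argument, your deletion loop is not polynomial time.

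Third, the edge bound has an arithmetic slip. After your first phase you have $O(k^2)$ surviving components, each with up to $c$ vertices, so ``splitting each component into $j\le c$ attachment points'' yields $O(ck^2)$ attachment points, not $O(ck)$. Invoking a quasi-bipartite bound with that many effective terminals gives $O(c^2k^4)$ edges, not $O((ck)^2)$.

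What the paper actually does for \Cref{t:qbsparsifiergen} is much closer to a black-box use of the matching construction: contract each weakly connected component $C_i$ of $G\setminus T$ to a single super-vertex, obtaining a genuinely quasi-bipartite graph $G_q$; run the matching-based construction of \Cref{t:qbsparsifier} on $G_q$ to get $G'_q$ with $O(k^2)$ edges; then un-contract, keeping all internal edges of each surviving $C_i$ and, for each edge $(t,C_i)\in E'_q$, all original edges between $t$ and vertices of $C_i$. The size bounds follow directly: $O(k^2)$ surviving components give $|V'|=O(ck^2)$; the $O(k^2)$ edges of $E'_q$ each unfold into at most $c$ edges, and the $O(k^2)$ components contribute at most $c^2$ internal edges each, giving $|E'|=O((ck)^2)$. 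Correctness is proved by rerunning the augmenting-path argument of Lemma~\ref{lem:sparsifierSimple} with components in the role of non-terminals (Invariant~\ref{seqinvargen:inC} becomes ``some vertex of $C_{i_z}$ lies in $C$''). This is constructive and deterministic polynomial time without any irrelevance oracle.
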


We can generalize this result even further, 
to graphs that have a small separator as in Defenition \ref{def:sep},
which informally says that one can delete a few vertices (at most $x$) 
so as to leave connected components all of bounded size (at most $\tau$). 

\begin{corollary}\label{cor:sparsmallsep}
Every directed graph $G=(V,E)$ with $k$ terminal vertices $T \subseteq V$
that has a $\tau$-separator (see Definition \ref{def:sep}) of size $x$, 
admits a vertex-cut sparsifier $G'=(V',E')$ with size bounds 
$|V'| = O(\tau(k+x)^2)$ and $|E'| = O(\tau^2(k+x)^2)$.
Furthermore, given $G$ and $T$, one can compute in deterministic polynomial time
a sparsifier $\Tilde{G} = (\Tilde{V},\Tilde{E})$ 
with $|\Tilde{V}| = O(\tau (k +x \tau)^2)$ 
and $|\Tilde{E}| = O(\tau^2(k+x \tau)^2)$. 
\end{corollary}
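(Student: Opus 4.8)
The plan is to reduce \Cref{cor:sparsmallsep} to \Cref{t:qbsparsifiergen} by absorbing the separator vertices into the terminal set. Concretely, suppose $G$ has a $\tau$-separator $S\subseteq V$ of size $x$, i.e.\ every connected component of $G\setminus S$ has at most $\tau$ vertices. Consider the enlarged terminal set $T':=T\cup S$; then $|T'|\le k+x$, and since $S\subseteq T'$, every connected component of $G\setminus T'$ has at most $\tau$ vertices, i.e.\ $c_G\le\tau$ with respect to $T'$. Applying \Cref{t:qbsparsifiergen} to $(G,T')$ yields a vertex-cut sparsifier $G'=(V',E')$ with $|V'|=O(\tau(k+x)^2)$ and $|E'|=O(\tau^2(k+x)^2)$. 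Finally, a vertex-cut sparsifier for $T'$ is in particular a vertex-cut sparsifier for $T$: for every $A,B\subseteq T$ we also have $A,B\subseteq T'$, so the minimum $(A,B)$-vertex-cut size is preserved, and $T\subseteq T'\subseteq V'$. This establishes the existential part of the corollary.

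For the algorithmic part, the obstacle is that we are not handed the separator $S$, and computing a minimum $\tau$-separator is in general intractable; so instead we compute, in polynomial time, a $\tau$-separator $S'$ of size $O(x\tau)$ and run the argument above with it. The algorithm is greedy: starting from $S'=\emptyset$, while $G\setminus S'$ has a connected component $W$ with $|W|>\tau$, run a BFS inside $W$ from an arbitrary vertex, let $R$ be the set of the first $\tau+1$ vertices visited (these induce a connected subgraph, since every vertex other than the root has its BFS parent among the earlier ones), add $R$ to $S'$, and repeat; each iteration removes $\tau+1\ge1$ vertices, so the process terminates. To bound $|S'|$, observe that the sets $R$ chosen across iterations are pairwise disjoint, each is a connected vertex set of size $\tau+1$, and the (unknown) optimal $\tau$-separator $S$ must intersect every connected vertex set of size $\tau+1$ — otherwise $G\setminus S$ would contain a connected component of more than $\tau$ vertices. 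Hence the number of iterations is at most $|S|=x$, so $|S'|\le(\tau+1)x=O(x\tau)$, and the whole computation is clearly polynomial time.

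Now set $T':=T\cup S'$, so $|T'|\le k+(\tau+1)x=O(k+x\tau)$ and all components of $G\setminus T'$ have at most $\tau$ vertices; invoking the constructive part of \Cref{t:qbsparsifiergen} on $(G,T')$ produces, in deterministic polynomial time, a vertex-cut sparsifier for $T'$ — hence also for $T$ — of size $|\Tilde{V}|=O(\tau(k+x\tau)^2)$ and $|\Tilde{E}|=O(\tau^2(k+x\tau)^2)$, as claimed. The corollary is thus mostly a routine reduction; the only non-immediate ingredient is the disjoint-packing argument that bounds $|S'|$ (and, for directed $G$, one should read ``connected component'' throughout as weakly connected, consistent with \Cref{def:sep}), and that is the step I would expect to be the main point worth spelling out.
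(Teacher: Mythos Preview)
Your proposal is correct and follows essentially the same approach as the paper: enlarge the terminal set by a $\tau$-separator and invoke \Cref{t:qbsparsifiergen} via \Cref{o:addterminals}, and for the algorithmic part use the greedy packing of connected $(\tau{+}1)$-vertex subsets to get a $(\tau{+}1)$-approximate $\tau$-separator. Your write-up is in fact more detailed than the paper's (e.g., the BFS realization and the explicit disjointness argument), but the ideas coincide.
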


The $\Omega(k^2)$ lower bound of Kratsch and Wahlstöm \cite{KW20}
is actually proved for a quasi-bipartite graph $G$,
and therefore our sparsifier construction is optimal
for \emph{directed} quasi-bipartite graphs.
We extend their lower bound to \emph{undirected} quasi-bipartite graphs, 
albeit at a loss of a logarithmic factor (and using different techniques).
This shows that our sparsifier constructions are near-optimal even for undirected inputs. 
We actually prove in Section \ref{sec:lower} two lower bounds.
The first one holds for a sparsifier $G'$ that must be a subgraph of the input graph $G$,
which is consistent with our sparsifier construction in Theorem \ref{t:qbsparsifier}. 
The second lower bound holds for every sparsifier,
and uses information-theoretic technique. 

\begin{theorem}\label{t:subgraphlowerbound}
(See Theorem \ref{t:subgraphlowerboundweighted}.)  
For every $k\in \NN $, there is an undirected quasi-bipartite graph $G$ with $k$ terminals, 
such that every vertex-cut sparsifier of $G$ which is a subgraph of $G$ 
must have $\Omega(k^2)$ vertices.
\end{theorem}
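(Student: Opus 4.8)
The plan is to exhibit an explicit undirected quasi-bipartite gadget for which every subgraph sparsifier must retain a quadratic number of vertices, by making ``many'' non-terminal vertices provably irrelevant-to-delete. A natural candidate is to take $k$ terminals $T=\{t_1,\dots,t_k\}$ and, for every unordered pair $\{t_i,t_j\}$, add a private non-terminal vertex $v_{ij}$ adjacent exactly to $t_i$ and $t_j$; this graph is quasi-bipartite since every edge touches a terminal, and it has $\Theta(k^2)$ non-terminal vertices. I would then argue that in any subgraph $G'\subseteq G$, each $v_{ij}$ must survive. Intuitively $v_{ij}$ is the only vertex that provides ``a second vertex-disjoint route'' between the two-terminal sets $A=\{t_i\}$ and $B=\{t_j\}$ beyond the direct edge $t_i t_j$, so deleting it would drop the $(\{t_i\},\{t_j\})$-mincut below its value in $G$.

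To make the separation robust I would pad the graph so that the relevant mincuts are larger than a constant: for instance, replace the single direct edge $t_i t_j$ by $\ell=\Theta(1)$ parallel ``paths of length two'' through dedicated private non-terminals, or more simply give each $v_{ij}$ a slightly richer neighborhood so that no other vertex can substitute for it. The key step is the lower-bound argument itself: for a fixed pair $\{i,j\}$, compute $\mathrm{mincut}_G(\{t_i\},\{t_j\})$, and show that removing $v_{ij}$ (while keeping all other vertices and all edges of $G$) strictly decreases this value, because every $(\{t_i\},\{t_j\})$-path in $G$ either uses $v_{ij}$ or passes through one of a fixed small set of ``bottleneck'' vertices whose deletion already disconnects $t_i$ from $t_j$ at the smaller value. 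Hence no subgraph sparsifier may omit $v_{ij}$, and since this holds simultaneously for all $\Theta(k^2)$ pairs, any subgraph sparsifier has $\Omega(k^2)$ vertices.

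I would carry this out in three steps: (i) define the gadget precisely and verify quasi-bipartiteness and the terminal count; (ii) for each pair $\{i,j\}$ pin down $\mathrm{mincut}_G(\{t_i\},\{t_j\})$ exactly by exhibiting both a family of vertex-disjoint $t_i$–$t_j$ paths and a matching separator (Menger's theorem); (iii) show that deleting any single non-terminal vertex $v$ causes $\mathrm{mincut}_{G-v}(A,B)<\mathrm{mincut}_G(A,B)$ for the pair $\{i,j\}$ that ``owns'' $v$, so $G-v$ fails to be a sparsifier — note a subgraph sparsifier could also delete edges, but deleting edges only decreases mincuts further, so it suffices to rule out vertex deletions. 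Combining (iii) over all pairs gives the bound, and the reference to Theorem~\ref{t:subgraphlowerboundweighted} indicates the weighted strengthening proved in Section~\ref{sec:lower}.

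The main obstacle I anticipate is step (iii): ensuring that $v_{ij}$ is genuinely \emph{irreplaceable}, i.e.\ that after its removal there is no alternative collection of vertex-disjoint $t_i$–$t_j$ paths recovering the old mincut value through the many other vertices $v_{i\ell}$ and $v_{\ell j}$ sharing a terminal with $v_{ij}$. This is exactly why the neighborhoods must be designed so that the ``capacity'' each non-terminal contributes to a given terminal pair is localized to that pair; getting the gadget's parameters right so that this holds for \emph{all} pairs at once, without blowing the terminal count past $k$, is the delicate part, and is presumably where the logarithmic factor in the related information-theoretic bound (Theorem~\ref{t:qbsparsifier}'s lower-bound counterpart) enters — though for the subgraph version a clean $\Omega(k^2)$ should be attainable.
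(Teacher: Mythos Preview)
There is a genuine gap: your gadget (subdivided $K_k$, one private non-terminal $v_{ij}$ per pair) does not work, and the obstacle you flag in step~(iii) is fatal rather than merely delicate. Two concrete failures:

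\begin{itemize}
\item For singleton pairs $A=\{t_i\}$, $B=\{t_j\}$, the definition of vertex cut used here allows terminals in the cut, so $\mincut_G(\{t_i\},\{t_j\})=1$ in \emph{any} graph (take $C=\{t_i\}$). Removing $v_{ij}$ cannot drop this value.
\item For general bipartitions $A,B=T\setminus A$ in subdivided $K_k$, one checks $\mincut_G(A,B)=\min(|A|,|B|)$, realized by $|A|\cdot|B|$ crossing non-terminals supplying far more than $\min(|A|,|B|)$ vertex-disjoint $A$--$B$ paths. Deleting a single $v_{ij}$ leaves enough paths to keep the mincut unchanged, so no $v_{ij}$ is forced.
\end{itemize}
Your proposed ``padding'' does not address this: adding parallel length-$2$ paths still leaves the trivial terminal cuts available, and you have not described any mechanism that makes $v_{ij}$ uniquely necessary.

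The paper's key idea, which you are missing, is to break the symmetry by introducing auxiliary high-weight terminals. Concretely, it works with a vertex-weighted graph: terminals $a_1,\dots,a_k$ (weight~$2$), each with a pendant terminal $d_i$ (weight~$4$) attached only to $a_i$, plus the non-terminals $v_{ij}$ (weight~$1$) as in your gadget. For the carefully chosen bipartition $X_{ij}=\{a_i\}\cup(D\setminus\{d_j\})$, the edge $a_x d_x$ crosses the partition for every $x\neq i,j$; since $w(d_x)=4>2=w(a_x)$, every minimum cut is forced to take $a_x$ rather than $d_x$. Once all $a_x$ with $x\neq i,j$ are removed, the only remaining $X_{ij}$--$(T\setminus X_{ij})$ path goes through $v_{ij}$, and it is strictly cheaper to cut $v_{ij}$ (weight~$1$) than $a_i$ or $a_j$ (weight~$2$). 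Hence $v_{ij}$ lies in every minimum $(X_{ij},T\setminus X_{ij})$-cut, and any subgraph sparsifier must retain it. Finally, the weighted construction is converted to an unweighted one by replacing each vertex of weight $w$ with $w$ parallel copies; since all weights are at most~$4$, this keeps the terminal count at $\Theta(k)$.
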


It follows that our sparsifier in Theorem \ref{t:qbsparsifier},
which is a subgraph and has $O(k^2)$ edges (and vertices),
is tight, i.e., achieves an optimal bound,
at least when using the technique of subgraph sparsification. Note that Theorem \ref{t:subgraphlowerbound} is not derived from the $\Omega(k^2)$ vertices lower bound presented by \cite{KW20}, as their statement relates to directed graphs.

We also present an information-theoretic lower bound on the representation size of a sparsifier of quasi-bipartite graphs.
It directly leads to the following result.

\begin{theorem}\label{t:infolowerbound}
(See Theorem \ref{t:lowergeneral}.)
For every $k\in \NN $, there is an undirected quasi-bipartite graph $G$ with $k$ terminals, 
such that every vertex-cut sparsifier of $G$ 
must have $\tilde{\Omega}(k^2)$ edges.
\end{theorem}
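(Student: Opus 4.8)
The plan is to use a counting (information-theoretic) argument: exhibit a family $\mathcal{F}$ of quasi-bipartite undirected graphs on a fixed terminal set $T$ of size $k$, such that (i) any two members of $\mathcal{F}$ differ in at least one $(A,B)$-vertex-cut value, so they cannot share a sparsifier, and (ii) $\log |\mathcal{F}| = \tilde{\Omega}(k^2)$. Combining these, the sparsifiers of the members must be pairwise distinct, so at least one of them needs $\tilde{\Omega}(k^2)$ bits to encode, and since a graph on $n$ vertices with $m$ edges can be encoded in $O(m\log n + n)$ bits, we get $m = \tilde{\Omega}(k^2)$ edges for that member. (A mild subtlety: a sparsifier $G'$ may use non-terminal vertices whose number is a priori unbounded, but one may either invoke a standard normalization showing there is always a sparsifier with $\poly(k)$ vertices — e.g.\ via \cite{KW20} — so that the encoding length is $O(m\log k)$, or phrase the bound purely in terms of the number of edges after contracting degree-$2$ Steiner vertices; I would take the former route for cleanliness.)

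For the construction I would split $T$ into two halves $L$ and $R$ of size $k/2$ each, and for each nonterminal "gadget" vertex attach it to a carefully chosen pair (or small set) of terminals, one from $L$ and one from $R$, so that the graph is quasi-bipartite. The natural first attempt is: for a bipartite graph $H$ between $L$ and $R$ (encoding $\Theta(k^2)$ bits), subdivide every edge of $H$ by a nonterminal vertex; this is quasi-bipartite. The issue is that plain subdivision does not let vertex cuts "see" the individual edges of $H$ — cutting one terminal kills many gadget vertices at once — so the min-cut values are too coarse to recover $H$. The fix is to replace each potential edge $\{u,v\}$, $u\in L$, $v\in R$, not by a single path but by a bundle of $\Theta(\log k)$ (or $\poly\log k$) internally-disjoint length-$2$ paths through fresh nonterminals; the number of such parallel gadgets present, taken modulo an appropriate threshold, will encode a few bits per pair $(u,v)$, while the redundancy ensures that querying a suitable $(A,B)$ with $A\ni u$, $B\ni v$ — isolating the pair by also putting many "blocking" terminals into both $A$ and $B$ — reads off exactly that count. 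This yields $|\mathcal{F}|$ with $\log|\mathcal{F}| = \Omega(k^2)$ and a graph with $\tilde{O}(k^2)$ edges, so the encoding-length lower bound on the sparsifier is $\tilde\Omega(k^2)$ edges.

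The key steps, in order, are: (1) define the $\tau$-bundle gadget for a pair of terminals and verify quasi-bipartiteness and the vertex/edge count; (2) identify, for each pair $(u,v)$, an explicit query $(A_{u,v},B_{u,v})$ whose min $(A,B)$-vertex-cut value in $G$ is a known affine function of the number of gadgets placed on $(u,v)$ — this requires showing that the blocking terminals force any optimal cut to "pay" separately for the $u$--$v$ bundle, which is a Menger-type / flow argument on the gadget; (3) conclude that two graphs of the family with different gadget-count vectors have different cut profiles, hence distinct sparsifiers; (4) count: a sparsifier with $m$ edges and $\poly(k)$ vertices is describable in $O(m\log k)$ bits, so $\max_{G\in\mathcal F} m(G_{\mathrm{spars}}) \ge \log|\mathcal F| / O(\log k) = \tilde\Omega(k^2)$.

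The main obstacle I expect is step (2): making the query $(A,B)$ truly local to one pair $(u,v)$. When we drop the other terminals of $L$ and $R$ into both $A$ and $B$ to neutralize them, we must ensure this does not accidentally create cheap cuts that bypass the bundle (e.g.\ through some other gadget whose two endpoints are now both "reachable"), and we must ensure the optimal cut value depends on the bundle size in a strictly monotone, decodable way rather than saturating. Handling this cleanly is exactly why the redundant $\tau=\poly\log k$-fold bundle is needed — it buys a gap that survives the perturbations — and getting the bookkeeping right (which is where the $\tilde\Omega$ rather than $\Omega$ loss comes from) is the technical heart of the argument; everything else is routine.
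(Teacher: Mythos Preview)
Your information-theoretic shell --- build a family $\mathcal{F}$ of quasi-bipartite graphs on a common terminal set with pairwise-distinct mincut profiles and $\log|\mathcal{F}|=\Omega(k^2)$, then count --- is exactly the paper's route, and steps (3)--(4) are fine. The gap is in step (2), which you correctly flag as the main obstacle but do not actually resolve.

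With unit-weight terminals your bundle encoding cannot work. Suppose you isolate the pair $(u,v)$ by placing every other terminal in $A\cap B$, hence into the cut. In the residual graph the only $u$--$v$ paths go through the $b_{uv}$ bundle nonterminals, so the residual mincut equals $\min\{w(u),w(v),b_{uv}\}=\min\{1,b_{uv}\}$. Any bundle size $\ge 1$ reads the same; the value saturates immediately, and enlarging $\tau$ does not help --- it is the \emph{terminal} weight, not the bundle multiplicity, that caps what the query can see. To make the count visible you need $w(u),w(v)>\tau$, but then replicating terminals to return to the unweighted setting multiplies $k$ by $\tau+1$, and the bit budget becomes $\Theta\bigl((k/\tau)^2\log\tau\bigr)$, maximized at $\tau=O(1)$: one bit per pair, not a count.

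The paper does exactly that one-bit-per-pair encoding, with a cleaner isolation gadget. Rather than dumping the other terminals into $A\cap B$, it attaches to each ``active'' terminal $a_x$ a private pendant terminal $d_x$ with $w(d_x)=4>w(a_x)=2$; putting $a_x$ and $d_x$ on opposite sides forces any minimum cut to contain $a_x$ (Lemma~\ref{l:dremove}). For the pair $(i,j)$ the query $X_{i,j}=\{a_i\}\cup(D\setminus\{d_j\})$ then forces every $a_x$ with $x\notin\{i,j\}$ into the cut, leaving only the weight-$1$ nonterminal $v_{i,j}$ (if present) as the last connection; its presence shifts the mincut by exactly $1$. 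All weights are $O(1)$, so the unweighted translation keeps $k$ within a constant factor and yields $\log|\mathcal{F}|=\Theta(k^2)$.

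So: keep your counting framework, drop the $\tau$-bundle in favor of a single presence/absence bit per pair, and introduce a small constant weight asymmetry (or the pendant-terminal trick) so that cutting a terminal is strictly costlier than cutting the lone nonterminal witnessing a pair. That is the missing idea.
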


\section{Preliminaries}
\label{sec:prelims}

\begin{definition}[Vertex Cut]
Let $G=(V,E)$ be an unweighted directed (resp. undirected) graph.
A \emph{vertex cut} between two subsets $A,B \subseteq V$, 
called in short an \emph{$(A,B)$-vertex-cut},
is a subset $C\subseteq V$ whose removal disconnected $A$ from $B$,
i.e., for all $a\in A \setminus C,b\in B \setminus C$ there is no directed (resp. undirected) path
from $a$ to $b$ in the graph $G\setminus C$. 
\end{definition}

We denote by $\mincut_G(A,B)$
the minimum size of a vertex cut between $A$ and $B$ in $G$. 
Note that our definition above does not require an $(A,B)$-vertex-cut to be disjoint from $A\cup B$.
In particular, $A$ and $B$ are themselves valid $(A,B)$-vertex-cuts,
and thus $\mincut_G(A,B)\le \minn{|A|,|B|}$.

\begin{definition}[Vertex-Cut Sparsifier]
\label{def:vertexcutsparsifier}
Let $G=(V,E)$ be an unweighted directed (resp. undirected) graph.
A \emph{vertex-cut sparsifier of $G$}
with respect to a set of terminals $T\subseteq V$ 
is a graph $G'=(V',E')$ that contains the terminals, i.e., $T\subseteq V'$,
and
\[
  \forall A,B\subseteq T, 
  \quad
  \mincut_G(A,B) = \mincut_{G'}(A,B). 
\]
\end{definition}

Note that for every $A,B \subseteq T$ with $A\cap B = D$, $A' = A\setminus D$, $B' = B\setminus D$ and $T' = T\setminus D$ we have 
\[
\mincut_G(A,B) = \min \set{\mincut_{G\setminus D}(A^*,B^*) \mid A' \subseteq A* \subseteq T'\setminus B', B^* = T'\setminus A^*} 
\]
The above equality holds because the vertices in $D$ are forced to be in $C$. Then, the superset $A^*$ can be interpreted as the set of vertices connected to $A'$ in $G\setminus C$ for a minimum $(A,B)$-vertex cut $C$.

It follows that it is sufficient to demand that for every $A,D \subseteq T$ disjoint subsets of $T$, $G'$ has $\mincut_{G\setminus D}(A,T\setminus (A\cup D)) = \mincut_{G'\setminus D}(A,T\setminus (A\cup D))$ for $G'$ to be a vertex cut sparsifier of $(G,T)$.

\begin{definition}[Quasi-Bipartite Graph]
A graph $G=(V,E)$ with terminals $T\subseteq V$
is called \emph{quasi-bipartite}
if every edge $e\in E$ has at least one endpoint in $T$. 
\end{definition}

In a directed graph $G=(V,E)$, two vertices $u,v \in V$ are in the same \textit{weakly connected} component if there is a path from $u$ to $v$ ignoring the directions of the edges of $G$.

\begin{definition}[$\tau$-separator, $\tau$-Quasi-Bipartite Graph]
\label{def:sep}
Let $G=(V,E)$ be an undirectred (resp. directed) graph.
We say that a vertex subset $S\subseteq V$ is a \emph{$\tau$-separator} of $G$
if every connected component (resp. weakly connected component) in $G\setminus S$
is of size at most $\tau$.

A graph $G$ with terminals $T$ is called \emph{$\tau$-quasi-bipartite}
if $T$ is a $\tau$-separator of $G$.
\end{definition}

Notice that quasi-bipartite is equivalent to $1$-quasi-bipartite;
hence, the family of $\tau$-quasi-bipartite graphs generalizes that of quasi-bipartite graphs.

\section{Sparsification Algorithms}
\label{sec:sparsifier}

In this section, we start by restricting our attention to quasi-bipartite graphs. We later show how to generalize our construction to sparsify $\tau$-quasi bipartite graphs.

\subsection{Sparsifiers with $O(k^2)$ Edges for Quasi-Bipartite Graphs}
\label{sec:sparsifierSimple}

We start by presenting a sparsifier construction
for an undirected quasi-bipartite graph $G=(V,E)$,
where $T\subseteq V$ is the set of terminals
and $N := V\setminus T$ is the set of non-terminals.
Since the graph is undirected, we shall denote edges as unordered pairs, e.g., $\set{a,b}$. 
Without loss of generality,
we may assume throughout that $G$ is a bipartite graph with sides $T$ and $V$.
Indeed, every edge $e=\set{a,b}$ that connects two terminals $a,b\in T$
can be subdivided, using a new non-terminal $v_e$, into two edges $\set{a,v_e},\set{v_e,b}$. 
It is easily verified that this step does not modify
the value of any relevant vertex cut (between subsets $A,B\subseteq T$).

\begin{definition}[Linking Edge, Link Graph]
We start by considering an undirected quasi-bipartite graph $G=(T\cup N,E)$. We will later show how to apply our construction to a directed quasi-bipartite graph. 
We say that an edge $e=\{a,v\}\in E$ 
\emph{links} terminal $a\in T$ to terminal $b\in T$
if both $\{a,v\}, \{v,b\} \in E$.%
\footnote{Informally, this is just the first edge on a length-2 path from $a$ to $b$.
Notice that we treat $(a,b)$ as an ordered pair here.
}
The \emph{link graph} of $G$ is the bipartite graph $\mathcal{L}_G$
whose vertex set has two sides $T\times T$ and $E$
and its edge set is $E_{\mathcal{L}} = \set{\set{(a,b),e} \mid \text{$e$ links $a$ and $b$} }$.
\end{definition}

With that, we are ready to present our construction.
Given an input graph $G=(V,N,E)$,
construct its link graph $\mathcal{L}_G$
and compute in it a maximum matching $M$.
Then construct the sparsifier $G'$ as follows.
For every matching edge $\{ (a,b),\{a,v\} \}\in M$, 
include in $G'$ the edges $\{a,v\}$ and $\{v,b\}$,
and the corresponding vertex $v$.
(If this rule includes the same edge or vertex multiple times,
it will appear in $G'$ only once.)
Formally, $G'=(V',E')$ is given by:
\begin{enumerate}
\item $V' := T \cup \bigcup_{\{(a,b),\{a,v\}\} \in M} \{ v \}$.
\item $E' := \bigcup_{\{(a,b),\{a,v\}\} \in M} \{ \{a,v\}, \{ v,b\} \}$
\end{enumerate}  
It is clear that $G'$ can be computed in $O(|E|)$ time (ignoring $poly(|T|)$ factors),
and that $|E'| \leq 2|M| = O(|T|^2)$.

\begin{lemma} \label{lem:sparsifierSimple}
$G'$ is a vertex-cut sparsifier of $G$.
\end{lemma}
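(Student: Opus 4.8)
The plan is to show two inequalities for every pair of disjoint subsets $A, D \subseteq T$, namely $\mincut_{G' \setminus D}(A, T \setminus (A \cup D)) = \mincut_{G \setminus D}(A, T \setminus (A \cup D))$; by the reduction noted after Definition \ref{def:vertexcutsparsifier}, this suffices. Since $G'$ is obtained from $G$ by deleting some non-terminal vertices (and incident edges), every path in $G'$ is also a path in $G$, so any $(A,B)$-vertex-cut in $G$ restricted to $V'$ is an $(A,B)$-vertex-cut in $G'$; this gives $\mincut_{G'}(A,B) \le \mincut_G(A,B)$ essentially for free (and the same holds after deleting $D$, since $D \subseteq T \subseteq V'$). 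The real content is the reverse inequality: a minimum $(A,B)$-vertex-cut in $G'$ can be converted into one in $G$ of the same size.

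For the hard direction, I would fix disjoint $A, D \subseteq T$, set $B = T \setminus (A \cup D)$, let $C'$ be a minimum $(A,B)$-vertex-cut in $G' \setminus D$, and aim to build an $(A,B)$-vertex-cut $C$ in $G \setminus D$ with $|C| \le |C'|$. The structural point is that in a quasi-bipartite (indeed bipartite) graph, any path from $A$ to $B$ of length $\ge 2$ passes through some non-terminal $v$ whose two terminal neighbours $a, b$ on the path are "linked" by the edge $\{a,v\}$ — and these linking relations are exactly the vertices of one side of $\mathcal{L}_G$. The key claim will be: if $\pi$ is an $A$-$B$ path in $G \setminus D$ that avoids $C'$, then (using maximality of the matching $M$) there is a "parallel" detour through a vertex retained in $G'$, or else $\pi$ must use a terminal that we can afford to add to $C$. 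Concretely, I expect the argument to replace each non-terminal $v \in C'$ by the set of terminals $\{a : \{(a,b),\{a,v\}\} \in M \text{ for some } b\}$ — but this could blow up the cut size, so the matching structure must be used to charge these terminals back injectively.

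The cleanest route is probably via \emph{augmenting paths / König-type reasoning on $\mathcal{L}_G$}: I would argue that because $M$ is a \emph{maximum} matching in the link graph, for any terminal pair $(a,b)$ that is linked in $G$ there is (by the no-augmenting-path characterization) either a matching edge at $(a,b)$ or at \emph{every} link-edge incident to $(a,b)$ the $E$-endpoint is matched to some other pair — and a matched vertex $v$ of $G'$ still witnesses \emph{some} length-2 connection. The technical heart is then: given a path in $G \setminus C'$ in the original graph $G$, walk along it terminal-by-terminal; each consecutive terminal pair $(a_i, a_{i+1})$ is linked, hence is covered in $\mathcal{L}_G$ by the vertex cover corresponding to $M$, so either $a_i$, or $a_{i+1}$, or the intermediate non-terminal's matched representative lies in a canonical small set, and one shows the path can be "rerouted" inside $G'$ — contradicting that $C'$ is a cut there. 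This rerouting-plus-charging step, making sure the replacement cut has size at most $|C'|$ and that every original $A$-$B$ path is genuinely blocked, is where I expect the main difficulty to lie; everything else (the directed case, subdivision preprocessing, the trivial inequality) is routine.

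Finally, I would handle the subtlety that $A$ and $B = T \setminus (A \cup D)$ together cover all of $T \setminus D$, so a cut may legitimately consist entirely of terminals; this actually helps, since whenever the rerouting argument is forced to "pay" for a terminal, that terminal lies in $T \setminus D = A \cup B$ and can be safely added. I'd then remark that the same construction and proof apply verbatim to directed quasi-bipartite graphs after re-reading "links" with ordered pairs and directed length-2 paths (the \texttt{link graph} definition already anticipates this), which yields the directed half of Theorem \ref{t:qbsparsifier}.
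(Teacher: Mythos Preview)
Your framing (easy direction via the subgraph property, hard direction via maximality of $M$ and augmenting-path reasoning) is correct in spirit, but the concrete mechanism you sketch has a real gap. Suppose $a - v - b$ is a length-$2$ path in $G\setminus C'$ with $v\notin V'$. As you say, the edge $e_0=\{a,v\}$ is then unmatched in $\mathcal{L}_G$, and since $M$ is maximum the pair $(a,b)$ must be matched, say to $\{a,v'\}$; this gives a path $a - v' - b$ in $G'$. But nothing prevents $v'\in C'$: the rerouted path may be cut in $G'$ even though the original one is not cut in $G$. So ``reroute and contradict that $C'$ is a cut in $G'$'' does not close. Your suggestion to instead build a cut $C$ in $G$ by replacing non-terminals of $C'$ with terminals runs into the same wall: you have no injective charging scheme, and the König/vertex-cover picture alone does not supply one.

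The paper resolves this with an extra extremal choice you do not make: among all minimum $(A,B)$-cuts in $G'$, pick $C$ with the \emph{maximum number of terminals}. From the rerouting above one gets $v_1:=v'\in C$ and $b_1:=b\in B\setminus C$. Now swap: $C\cup\{b_1\}\setminus\{v_1\}$ has the same size but strictly more terminals, so it is \emph{not} a cut in $G'$; a witnessing path in $G'$ must pass through $v_1$ (since $C$ \emph{is} a cut), producing a new $b_2\in B\setminus C$ linked to $a$ via some $e_j$. An augmenting-path argument in $\mathcal{L}_G$ (this is where your König intuition actually bites) forces $(a,b_2)$ to be matched, yielding a new $v_2\in C$ distinct from $v_1$. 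Iterating builds $t=\lceil|T|/2\rceil+1$ distinct vertices in $C$, contradicting $|C|<t$. The ``maximal terminals'' choice is the missing idea that converts each rerouting obstruction into a fresh element of $C$.

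Finally, your last paragraph is incorrect: the directed case does \emph{not} go through verbatim. In the undirected proof one may assume WLOG that the missing edge on $a - v - b$ is the first one $\{a,v\}$; with directions, if $(a,v)\in E'$ but $(v,b)\notin E'$, the edge $(v,b)$ does not out-link $a$ to $b$ and the argument stalls. The paper fixes this by running the construction twice, once on an out-link graph and once on an in-link graph, taking a maximum matching in each and including both sets of edges in $G'$.
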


\begin{proof}
Assume to the contrary that $G'$ is not a vertex-cut sparsifier. It follows that there are two disjiont sets of terminals $A,D\subseteq T$ such that $\mincut_{G\setminus D}(A,B) \neq \mincut_{G'\setminus D}$ with $B=T\setminus ( A \cup D)$ (According to the discussion following Definition \ref{def:vertexcutsparsifier}). For the remainder of this proof, we assume $D = \emptyset$. Removing this assumption does not require any significant modification to our proof - a proof without this assumption is simply obtained by replacing every instance of $G$ (resp. $G'$) in our proof with $G\setminus D$ (resp. $G' \setminus D$).

Since $G'$ is a subgraph of $G$, 
every $(A,B)$-vertex-cut in $G$ is also an $(A,B)$-vertex-cut in $G'$, 
and thus $\mincut_{G'}(A,B) < \mincut_{G}(A,B)$. 

Let $C\subseteq V$ be a minimum $(A,B)$-vertex-cut in $G'$ 
that has a \emph{maximal} number of terminals. Then $|C| < \mincut_{G}(A,B)$, 
and thus $G\setminus C$ contains
two terminals $a \in A \setminus C$ and $b^* \in B\setminus C$ 
that are connected by a path $P=(a,\ldots, b^*)$. 
Since $G$ is bipartite, every non-terminal in the path $P$ is followed by a terminal.
We can therefore assume without loss of generality (by exchanging $a,b^*$) 
that $P=(a,v_0,b^*)$ for some non-terminal $v_0 \notin C$. 
Since $C$ is an $(A,B)$-vertex-cut in $G'$, 
at least one of $\{a,v_0\},\{v_0,b^*\}$ is not an edge in $G'$. 
We assume without loss of generality that the edge $e_0 = \{ a,v_0 \}$ is missing from $G'$. 
Notice that $e_0$ links $a$ and $b^*$ in $G$, 
and its absence from $G'$ indicates that $e_0$ is not matched by $M$. 

Both $A$ and $B$ are $(A,B)$-vertex-cuts in $G'$, thus
$|C| \le \min(|A|,|B|) < t$ for $t:= \lceil \frac{|T|}{2} \rceil+1$. 
We next show that there is a sequence $v_0,v_1,\ldots,v_t$ of distinct vertices,
such that $v_i\in C$ for all $i\in [1\dd t]$,
and thus reach a contradiction that $|C| \geq t$. 

Formally, we construct two sequences $S_v = (v_0,v_1,\ldots,v_t)$ and $S_b = (b_1,b_2, \ldots, b_t)$ of \emph{distinct} vertices 
and a decreasing function $f: [1\dd t] \to [0\dd t]$,
that satisfy the following four invariants.
\begin{enumerate}
    \item \label{seqinvar:edges} For every $i\in [0\dd t]$, there is an edge $e_i = \{a,v_i\} \in E$.
    \item \label{seqinvar:inC} For every $i\in [1\dd t]$, $v_i \in C$.
    \item \label{seqinvar:inB} For every $i\in [1\dd t]$, $b_i \in B\setminus C$, and $((a,b_i),e_i) \in M$.
    \item \label{seqinvar:func} For every $i\in [1\dd t]$, $e_{f(i)}$ links $a$ and $b_i$.
\end{enumerate}

Our construction of $S_v$, $S_b$ and $f$ is by induction on $i=0,1,2,\ldots$,
namely, each step $i$
constructs the prefix $S^i_v = (v_0,\ldots, v_{i})$ of $S_v$
and the prefix $S^i_b = (b_1,\ldots, b_{i})$ of $S_b$,
and also determines values $f(x)$ for $x\in [1\dd i]$, 
in a manner that satisfies the four invariants. 

In the base case $i=0$, we initialize $S^0_v = (v_0)$ and $S^0_b$ to be an empty sequence, and no value is decided for $f$.
All the invariants are satisfied
(invariants~\ref{seqinvar:inC}-\ref{seqinvar:func} vacuously).

For $i=1$, we extend the prefixes as follows.
Recall that $e_0 = \{a,v_0\}$ is not matched by $M$.
If the terminal pair $(a,b^*)$ is not matched by $M$,
then $M' = M\cup \{ ((a,b^*),e_0)\}$ is a matching in $\mathcal{L}_G$,
which contradicts the maximality of $M$.
It follows that $(a,b^*)$ is matched by $M$,
i.e., $\{(a,b^*),e_1\} \in M$ for some $e_1=\{a,v^*\}$, and according to our construction, $\set{a,v^*},\set{v^*,b} \in E'$. Since $C$ is an $(A,B)$ cut in $G'$, and $a,b^* \notin C$, we must have $v^* \in C$.
We can therefore set $v_1 = v^*$, $b_1 = b^*$, and $f(1) = 0$ (see  Figure \ref{fig:step0}) to satisfy Invariants \ref{seqinvar:edges}-\ref{seqinvar:func}.

We proceed to the case $i\in [2\dd t]$.
Let $V_i = \{v_1,\ldots, v_{i-1}\}$, $B_i = \{b_1,\ldots,b_{i-1}\}$,
and $C' = C \cup B_i \setminus V_i$. 
Observe that $|C'| = |C|$ and $C'$ contains $i-1\ge 1$ more terminals than $C$,
hence $C'$ cannot be an $(A,B)$-vertex-cut in $G'$. 
It follows that $G'\setminus C'$ contains a path $P' = (a',u,b^*)$
from terminal $a'\in A \setminus C'$ to terminal $b^* \in B\setminus C'$
through non-terminal $u \in N\setminus C'$.

Since $C$ is an $(A,B)$-vertex-cut in $G'$,
it must contain at least one of the vertices in the path $P'$.
But the terminals $a'$ and $b^*$ cannot be in $C$,
because the terminals in $C$ are contained also in $C'$,
and therefore $u\in C\setminus C' = V_i$, i.e., $u = v_j$ for some $j<i$. 
Notice that the edge $e_j =\set{ a,v_j}$ links $a$ and $b^*$.
Assume for now that the pair $(a,b^*)$ is matched by $M$ to some $e^*=\set{a,v^*}$.
Under this assumption, we set $v_i = v^*$, $b_i = b^*$, and $f(i) = j$,
and we need to show that this assignment satisfies all the invariants.

First, we need to show that vertices $b^*$ and $v^*$ are distinct from their respective sequences.
We know $b^* \notin C'$ and thus $b^* \notin B_i$.
Assume towards contradiction that $v^* = v_x$ for some $x < i$. It follows that $e' = \set{a,v_x} = e_x$ is matched to $(a,b_i)$ in $M$.
By the inductive hypothesis about prefixes $S^{i-1}_v$ and $S^{i-1}_b$,
the edge $e_x$ is matched with $(a,b_x) \neq (a,b_i)$, reaching a contradiction.

Invariants~\ref{seqinvar:edges}, \ref{seqinvar:inB},
and~\ref{seqinvar:func} are clearly satisfied.
Since $(a,b_i)$ is matched with $e' = \set{a,v^*}$ in $M$,
the edges of the path $\tilde{P}=(a,v^*,b^*)$ are in $G'$
and therefore one of the vertices of $\tilde{P}$ must be in the $(A,B)$-vertex-cut $C$.
Since $a,b^* \notin C$, we have that $v^*\in C$ and Invariant~\ref{seqinvar:inC} is satisfied as well.

We have thus shown that if $(a,b^*)$ is matched in $M$
then the prefixes can be properly extended.
We proceed to prove that this is always the case.
\begin{claim}\label{c:abprimematched}
$(a,b^*)$ is matched in $M$.
\end{claim}
\begin{proof}
Assume towards contradiction that $(a,b^*)$ is not matched in $M$.
Denote $b^*=b_i$. We set $f(i) = j$ and denote as $y \ge 1$ the minimal integer such that $f^y(i) = 0$ (where $f^y$ denotes applying $f$ repeatedly $y$ times).
Since $f$ is decreasing, $y$ is well defined. We denote $f^0(i) = i$.

Recall that $e_j$ links $a$ and $b_i$.
Therefore, the inductive hypothesis implies that $e_{f(x)}$ links $a$ and $b_x$ for all $x\in[1\dd i]$ (Invariant \ref{seqinvar:func}).

We define the following sets of edges in $\mathcal{L}_G$: 
\begin{enumerate}
\item $\NewSet = \Big\{ \set{(a,b_i),e_{f(i)}},\set{(a,b_{f(i)}),e_{f^2(i)}}, \ldots, \set{ (a,b_{f^{y-1}(i)}) , e_0} \Big\}$ 
\item $\OldSet = \Big\{ \set{(a,b_{f(i)}) , e_{f(i)}},\set{(a,b_{f^2(i)}),e_{f^2(i)}}, \ldots, \set{(a,b_{f^{y-1}(i)}),e_{f^{y-1}(i)}} \Big\}$
\end{enumerate}

We proceed to show that $M' = M \cup \NewSet \setminus \OldSet$
is a matching of size $|M| +1$ in $\mathcal{L}_G$,
and this will contradict the maximality of $M$.
Intuitively, $M'$ is obtained by augmenting $M$ with the alternating path $\NewSet \cup \OldSet$, and can be described as follows: We extend $M$ by adding the link-graph edge $\set{ (a,b_i), e_{f(i)}}$.
If $f(i)= 0$, this link-graph edge does not intersect with any edge in $M$, as both $(a,b_i)$ and $e_0$ are not matched in $M$.
Otherwise (i.e., $f(i) \neq 0$), this results in $M$ containing two link-graph edges that touches $e_{f(i)}$.
We fix that by ``swapping'' $(a,b_{f(i)})$ to match with $e_{f^2(i)}$ instead of with $e_{f(i)}$.
We keep applying these upwards swaps until finally adding a link-graph edge
that touches $e_0$, thus strictly increasing the size of $M$.
See Figure \ref{fig:mmprime} for an illustration of $\NewSet$, $\OldSet$, and $M'$.

$\OldSet \subseteq M$ by the correctness of the inductive hypothesis, 
and $\NewSet$ contains only edges from $\mathcal{L}_G$
because each $e_{f(x)}$ links $(a,b_x)$.
Moreover, since $f$ is decreasing, the edges in $\NewSet$ are of the form $((a,b_x),e_{x'})$ with $x' < x$, and therefore $\NewSet \cap M = \emptyset$,
implying that $|M'| = |M|+ 1$.

It remains to show that $M'$ is a matching.
Let $\set{ (a,b_{f^d(i)}),e_{f^{d+1}(i)}} \in \NewSet$.
Since the link-graph edges in $\NewSet$ are vertex disjoint,
it suffices to show that both $(a,b_{f^d(i)})$ and $e_{f^{d+1}(x)}$ do not participate in any other edge in $M \setminus \OldSet$.
For $d = 0$, the pair $(a,b_i)$ is not matched in $M$ according to our assumption. For $d \in [1\dd y-1]$, the pair $(a,b_{f^d(i)})$ is matched in $M$ via the link-graph edge $\set{ (a,b_{f^d(i)}),e_{f^d(i)}} \in M \cap \OldSet$.

It follows that $(a,b_{f^d(i)})$ does not participate in any other edge in $M'$. 
As for $e_{f^{d+1}}$, for $d= y-1$ the edge $e_{f^y(i)} = e_0 = \{a,b_0\}$ is not matched in $M$. For $d\in [0\dd y-2]$, we have $e_{f^{d+1}(i)} = e_x$ for some $x \ge  1$ due to the minimality of $y$. Therefore, the edge $e_x$ is matched in $M$ via the link-graph edge $((a,b_x),e_x) \in M \cap \OldSet$. It follows that $e_x$ does not participate in any other link-graph edge in $M'$, as required.
We see that $M'$ is a matching in $\mathcal{L}_{G}$ of size strictly larger than $M$,
and we have reached a contradiction.

It follows that $(a,b_i)$ must be matched in $M$, concluding the proof of Claim \ref{c:abprimematched}
\end{proof}

With Claim \ref{c:abprimematched}, we have shown how to construct the sequence $v_0,v_1 \ldots v_t$ with the required invariants. In particular, $\set{v_1 \ldots v_t} \subseteq C$ and $|C| \ge t$, which contradicts the minimality of $C$. This concludes the proof of Lemma \ref{lem:sparsifierSimple}.
\end{proof}

We proceed to explain how a similar sparsifier can be constructed for a directed bipartite graph. We start by showing where the undirected construction fails when applied to a directed graph. When assuming to the contrary that $G'$ is not a vertex cut sparsifier, in the proof of Lemma \ref{lem:sparsifierSimple}, we concluded that there is a length $2$ path $a-v-b$ in $G$ that avoids the $(A,B)$ minimum cut $C$ in $G'$. Since $C$ is an $(A,B)$ cut in $G'$, we deduced that one of the edges $\set{a,v},\set{v,b}$ is absent from $G'$. Since the path was undirected, we were able to assume that $e_0 =\set{a,v}$ is the edge missing from $G'$, which leads to the conclusion that $e_0$ is not matched in $M$, even though it can be paired with $(a,b)$ as it links $a$ and $b$.

If the path is directed, we are not able to make this assumption. If $e_0 = (a,v)$ happens to be the edge missing from $G'$, our proof carries in an identical manner and would work for the directed case.  If $(a,v)$ is present in $G'$ and $(v,b)$ is the absent edge - our proof fails. This is due to the fact that $(v,b)$ does not link $a$ and $b$, and therefore can not be used to extend $M$ and reach a contradiction in the proof of Claim \ref{c:abprimematched}. 

We solve this problem by extending the definition of 'linking' edges as follows.
\begin{definition}[Linking Directed Edge]
Let $G=(V,E)$ be a quasi bipartite graph with terminals $T$. For a pair of terminals $a,b \in T$, we say that the edge $e=(a,v) \in E$ is out-linking $a$ and $b$ if $(v,b)\in E$. Similarly, we say that an edge $(v,a)$ is in-linking $a$ and $b$ if $(b,v) \in E$.

The out-link graph and the in-link graph are defined similarly to the link graph, with an edge between $(a,b)$ and $e$ if $e$ out-links (resp. in-links) $a$ and $b$. 
\end{definition}

Now, the foundation of our sparsifier will be two maximum matchings instead of one. A maximum matching $M_{in}$ in the in-link graph and a maximum matching $M_{out}$ in the out-link graph. Our sparsifier consists of the edges $(a,v)$ and $(v,b)$ such that $(a,b)$ was matched to $(a,v)$ in $M_{out}$, or $(v,b)$ was matched to $(a,b)$ in $M_{in}$. We proceed from the problematic point in the undirected case, but with this enhanced construction.

If the edge in the path $a-v-b$ that is absent from $G'$ is $e^{out}_0 = (a,v)$, we get that $e^{out}_0$ was not matched in $M_{out}$ even though it could be matched to $(a,b)$, and the proof carries identically as in the undirected case. If the missing edge is $e^{in}_0 = (v,b)$, we get that $e^{in}_0$ was not matched in $M_{in}$ even though it could be matched to $(a,b)$. From this point on, the proof carries in a symmetrical manner to the proof of the undirected case.

\subsubsection{Generalizations and Applications}

In this section, we show how to generalize our technique to sparsify $\tau$-quasi bipartite graphs.
We prove the following variant of Theorem \ref{t:qbsparsifiergen} for undirected $\tau$ quasi bipartite graphs. A sparsifier for directed $\tau$-quasi bipartite graphs can be obtained by modifying the proof as shown in Section \ref{sec:sparsifierSimple}
\begin{theorem}
Every $\tau$-quasi bipartite graph $G=(V,E)$ with $k$ terminal vertices $T \subseteq V$
admits a vertex-cut sparsifier $G'=(V',E')$ with size bounds 
$|V'| = O(\tau k^2)$ and $|E'| = O((\tau k)^2)$. Furthermore, given $G$ and $T$, 
such a sparsifier can be computed in deterministic polynomial time. 
\end{theorem}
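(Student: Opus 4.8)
The plan is to reduce the $\tau$-quasi-bipartite case to the quasi-bipartite case by a \emph{bundling} (or \emph{grouping}) construction. Since $T$ is a $\tau$-separator, the non-terminal set $N = V \setminus T$ decomposes into connected components $K_1, K_2, \ldots$, each of size at most $\tau$. The key idea is to collapse each component $K_j$ into a single ``super-vertex'' in an auxiliary graph, but to remember enough information so that a small vertex cut in the auxiliary graph can be translated back to a small vertex cut in $G$, and vice versa. First I would replace each component $K_j$ by a gadget: concretely, for each component $K_j$ consider the multiset of edges linking pairs of terminals \emph{through} $K_j$ (i.e.\ pairs $(a,b)$ such that there is an $a$--$b$ path with all internal vertices in $K_j$, equivalently an edge from $a$ into $K_j$ and a path inside $K_j$ reaching a vertex adjacent to $b$). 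The role played by the single non-terminal $v_0$ in the quasi-bipartite proof is now played by an entire component; removing $v_0$ in the old proof corresponds to removing all of $K_j$ (at most $\tau$ vertices) here.

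The main steps, in order. (1) Set up the generalized \emph{link graph}: one side is $T \times T$, the other side is the set of (component, terminal)-incidences, or more simply the set of edges incident to $N$ grouped by component; an edge of the link graph records that a particular edge-into-a-component participates in linking $a$ to $b$. (2) Compute a maximum matching $M$ in this generalized link graph. (3) Build $G'$ by, for every matched pair $\{(a,b), \cdot\}$, including the \emph{entire} component $K_j$ responsible for that link together with all edges between $K_j$ and $T$. Since there are at most $k^2$ matched pairs and each contributes a component of $\le \tau$ vertices with $\le \tau k$ incident edges, we get $|V'| = O(\tau k^2)$ and $|E'| = O(\tau^2 k^2)$, matching the claimed bounds. (4) Prove correctness by re-running the argument of Lemma~\ref{lem:sparsifierSimple}: assume $G'$ fails, take a minimum $(A,B)$-cut $C$ in $G'$ with the maximum number of terminals, find a short $T$-to-$T$ path in $G \setminus C$ through some component $K_{j_0}$, and — because each component has size $\le \tau$ rather than $1$ — derive that either $C$ must contain $\ge t$ distinct terminals (contradiction, as before) or $M$ can be augmented (contradicting maximality). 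The sequence-construction and the augmenting-path argument of Claim~\ref{c:abprimematched} carry over essentially verbatim, with ``the vertex $v_i$'' replaced by ``the $\le \tau$ vertices of component $K_i$ that lie in $C$,'' so that the cut-size lower bound one extracts is $|C| \ge t$ up to constants that do not affect the contradiction.

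The hard part will be the bookkeeping in the correctness proof: when we remove a whole component rather than a single vertex, a component can be ``entered'' and ``exited'' by a path at several different terminals, so the clean statement ``$e_0$ links $a$ and $b^*$'' must be replaced by a statement about \emph{which} boundary edge of the component is relevant, and one must be careful that the component included in $G'$ for a matched pair really does reconnect the right terminals inside $G'$ (i.e.\ that including the component with all its $T$-incident edges preserves every internal $a$--$b$ route, which it does, since we include the component in full). A secondary subtlety is that a single component may be included in $G'$ via several different matched pairs; this is harmless (we include it once) but needs a sentence. I would also note the directed case: exactly as in Section~\ref{sec:sparsifierSimple}, one uses two matchings, in an ``out-link'' and an ``in-link'' version of the generalized link graph, where a component now links $a$ to $b$ if there is a directed $a \to b$ path internal to the component; the rest is symmetric. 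Finally, Corollary~\ref{cor:sparsmallsep} follows by first deleting the $x$ separator vertices, adding them to $T$ to obtain a $\tau$-quasi-bipartite instance with $k + x$ terminals, applying the theorem, and then re-expressing the deleted vertices as terminals of the original instance (absorbing the extra $\tau$ factors as stated).
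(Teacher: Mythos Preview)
Your plan is essentially the paper's approach: contract each non-terminal component to a single super-vertex, apply the quasi-bipartite matching construction to the contracted graph, and then re-expand the surviving super-vertices. The correctness argument you sketch is also the paper's (with one representative vertex per component playing the role of $v_i$).

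There is, however, a concrete gap in your size bound. You write that each matched pair contributes a component of $\le \tau$ vertices ``with $\le \tau k$ incident edges'' and conclude $|E'| = O(\tau^2 k^2)$. But $k^2$ matched pairs times $\tau k$ terminal-incident edges per component is $\tau k^3$, not $\tau^2 k^2$. The paper avoids this by \emph{not} including all edges between a kept component and $T$: when the pair $(a,b)$ is matched to the incidence $(a, K_j)$, only the edges between $\{a,b\}$ and $K_j$ are retained (together with the $O(\tau^2)$ internal edges of $K_j$). This keeps the terminal-to-component edge count at $O(\tau)$ per match, hence $O(\tau k^2)$ total, and the internal edges give the dominant $O(\tau^2 k^2)$ term. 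Your step~(3) as stated (``together with all edges between $K_j$ and $T$'') must be tightened in this way to recover the claimed $|E'| = O((\tau k)^2)$ bound; correctness is unaffected, since the restricted edge set already suffices to force a cut vertex inside each matched component.

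A smaller point: in the correctness proof you propose to swap out ``the $\le \tau$ vertices of component $K_i$ that lie in $C$''. The paper instead fixes a \emph{single} witness vertex $v_i \in K_i \cap C$ per component and swaps only that one into $B_i$, preserving $|C'| = |C|$ exactly; this keeps the maximality-of-terminals argument clean. Your ``up to constants'' hedging is unnecessary --- one vertex per component already gives $|C| \ge t$ with no loss.
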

\begin{proof}
We wish to apply a similar construction to the one used in Theorem \ref{t:qbsparsifier}. Let $C_1,C_2\dd C_{\ell}$ be the connected components of $G\setminus T$. We start by shrinking every connected components $C_i$ to create a quasi bipartite graph $G_q = (V_q,T,E_q)$ with $V_q = \{C_i \mid i \in[1\dd \ell] \}$ and $E_q = \{ (t,C_i) \mid \exists_{t\in T, v\in C_i} (t,v) \in E \}$.  

We apply the construction of Theorem \ref{t:qbsparsifier} on $G_q$ to obtain a sparsifier with $G'_q = (V'_q,T,E'_q)$ for $G_q$. We denote as $M$ the maximum matching in $\mathcal{L}_{G_q}$ that was used to construct $G'_q$. We now reverse the shrinking of every connected component to get the sparsifier $G'=(V',T,E')$ with $V' = \cup_{C_i \in V'_q} C_i$ and $E' = \{(t,v)\in E |v\in C_i \textit{ and } (t,C_i) \in E'_q  \} \cup \set{\set{u,v} \mid u,v \in C_i \textit{ for some } i \in [1\dd \ell]}$.

Since $G'_q$ contains $O(k^2)$ edges and $O(k^2)$ vertices, and every vertex in $G'_q$ is extended to a connected component $C_i$ with at most $c$ vertices and $O(c^2)$ edges , we have $|V'| \in O(ck^2)$ and $|E'| = O((ck)^2)$ as required. 

We proceed to show that $G'$ is a vertex cut sparsifier. Assume to the contrary that there are two disjoint sets $A,D \subseteq T$ with $B = T \setminus (A \cup D)$ such that $\mincut_{G'\setminus D}(A,B) < mincut_{G\setminus D}(A,B)$. As in the proof of Theorem \ref{t:qbsparsifier}, we assume that $D = \emptyset$ for the sake of clear presentation. This assumption can be removed without causing any significant change to the proof. Let $C$ be a minimum $(A,B)$ vertex cut in $G'$ that contains a maximal number of terminals. Since $|C| < \mincut_G(A,b)$, $C$ is not a vertex cut in $G$ and we have a path $a,v_1\dd v_p,b$ in $G\setminus C$. Without loss of generality, we assume that $v_1 , v_2 \dd v_p \in C_{i_0}$ for some $i_0\in [1\dd \ell]$, and that the edge $\set{a,v_1}$ is absent from $G'$. It follows that the edge $e_0 = \set{a,C_{i_0}}$ is absent from $G'_q$, and therefore is not matched in $M$. Note that $e_0 = \set{a,C_{i_0}} \in E_q$ links $a$ and $b$ in $G_q$ and therefore can be matched with $(a,b)$ in $M$.

Similarly to the proof of Theorem \ref{t:qbsparsifier}, we use the unmatched edge $e_0$ to construct a sequence of $t = \lceil \frac{|T|}{2} \rceil + 1$ connected components $C_{i_1} \dd C_{i_{t}}$ such that every connected component contains at least one vertex in $C$, thus contradicting its minimality.

Formally, we construct two sequences $C_{i_0},C_{i_1} \dd C_{i_t}$ and $b_1,b_2 \dd b_t$ and a decreasing function $f:[1\dd t] \rightarrow [0\dd t]$ satisfying the following conditions.
\begin{enumerate}
    \item \label{seqinvargen:edges} For every $z\in [0\dd t]$, there is an edge $e_z = \{a,C_{i_z}\} \in E_q$.
    \item \label{seqinvargen:inC} For every $z\in [1\dd t]$, there is a vertex $v_z \in C_{i_z} \cap C$.
    \item \label{seqinvargen:inB} For every $z\in [1\dd t]$, $b_z \in B\setminus C$, and $((a,b_z),e_z) \in M$.
    \item \label{seqinvargen:func} For every $z\in [1\dd t]$, $e_{f(z)}$ links $a$ and $b_z$.
\end{enumerate}

Since $e_0$ is not matched in $M$ and links $a$ and $b$, the pair $(a,b)$ must be matched to another edge $e_1 = (a,C_{i_1}) \in E_q$ that links $a$ and $b$. Since $\set{(a,b) e_1 = \set{a,C_{i_1}}} \in M$, all the vertices of $C_{i_1}$ are present in $G'$, as well as all the edges connecting $a$ or $b$ with vertices in $C_{i_1}$.$C_{i_1}$ is connected, and both $a$ and $b$ are connected to $C_{i_1}$ in $G'$, so there must be a vertex $v_1 \in C_{i_1} \cap C$ for $C$ to disconnect $a$ and $b$.  It follows that $C_{i_0},C_{i_1}$, $b_1$ and $f(1) = 0$ are satisfactory initial assignments for our sequences and for $f$.

The construction of the sequences is carried in an inductive manner. For some $z \ge 2$, assume that we have already constructed the prefixes $C_{i_0},C_{i_1} \dd C_{i_{z-1}}$, $b_1,b_2 \dd b_{z-1}$ and the values $f(x)$ for every $x\in [1\dd z-1]$ in a manner that satisfies our invariants. 

Consider $B' = \{b_1,b_2 \dd b_{z-1}$ and $V' = \{v_1,v_2 \dd v_{z-1}$, and let $C' = C \cup B' \setminus V'$. Since $|C'| = |C|$, and $C'$ contains more terminals than $C$, $C'$ is not an $(A,B)$ vertex cut in $G'$. 

$C'$ can be used to find satisfactory assignment for $C_{i_{z}}$, $b_{z}$ and for $f(z)$ in a similar manner as $C'$ is used in the proof of Theorem \ref{t:qbsparsifier}.
\end{proof}

\subsection{General Graphs with Small Separators}
We lift our result to general graphs with small disconnecting sets by applying the following common observation.

\begin{observation}\label{o:addterminals}
Consider a graph $G=(V,E)$ with terminals $T \subseteq T'\subseteq V$. 
If $G'$ is a vertex-cut sparsifier of $(G,T')$,
i.e., with respect to the extended terminal set $T'$,
then $G'$ is also a vertex-cut sparsifier of $(G,T)$.
\end{observation}

By Observation \ref{o:addterminals}, 
if a graph $G$ with terminals $T$ is not quasi-bipartite 
but has a vertex cover $C$, 
then we can set $T' = T \cup C$ to obtain a quasi-bipartite graph with $|T'| = |C| + k$. 
We can then apply Theorem \ref{t:qbsparsifier} to construct a vertex-cut sparsifier for $(G,T')$.

In order to achieve a polynomial-time construction for the sparsifier, 
we apply a $2$-approximation algorithm to obtain a vertex cover $C'$ with $|C'| \le 2|C|$, 
set $T' = C' \cup T$ and proceed in a similar manner. 
We conclude the above discussion with the following. 
\begin{corollary}
A graph $G=(V,E)$ with terminals $T$ and vertex cover $C$,
where we denote $k = |T|$ and $vc = |C|$, 
admits a vertex cut sparsifier $G'=(V',E')$ with $|E'| = O((k+vc)^2)$. 
Furthermore, $G'$ can be constructed from $G$ in polynomial time.
\end{corollary}

Our vertex-cut sparsifier for $\tau$-quasi-bipartite graphs can be generalized in the same manner. 
If $G=(V,E)$ with terminals $T$ contains a $\tau$-separator $S$, 
we can set $T' = S\cup T$ to obtain a $\tau$-quasi bipartite graph on which Theorem \ref{t:qbsparsifiergen} can be applied.

To achieve a polynomial-time construction algorithm 
we need to efficiently find a small $\tau$-separator. 
We observe that if $G$ has a $\tau$-separator of size $x$, 
then we can find a $\tau$-separator $S'$ with $|S'| \le (\tau +1 )x$ 
by a generalization of the classical $2$-approximation for vertex cover, as follows. 
We initialize an empty $\tau$-separator $S'$, 
and as long as $G\setminus S'$ contains a connected component of size at least $\tau +1$, 
we select a set of $\tau+1$ vertices $V' \in G\setminus S'$ such that $G[V']$ is connected and add $V'$ to $S'$. 
It can be easily verified that for every $V'$ that we select in this process, 
a minimum $\tau$-separator must include at least one vertex in $V'$.

The above discussion yields Corollary \ref{cor:sparsmallsep}. 
\section{Lower Bounds}\label{sec:lower}

In this section, we provide two lower bounds on the size of sparsifiers of undirected quasi-bipartite graphs. Both lower bounds hold even if the sparsifier is only required to maintain minimum cuts between bi-partitions of the terminals,
i.e., the sparsifier $G'$ must satisfy 
\[
  \forall A\subseteq T,\ B=T\setminus A,
  \qquad
  \mincut_{G'}(A,B) = \mincut_{G}(A,B).
\]
(In particular, our lower bounds hold for sparsifiers that satisfy Definition \ref{def:vertexcutsparsifier}.)  

To simplify the exposition, we present our lower bounds in the more general setting of \emph{vertex-weighted} graphs. 
In this setting, vertices have weights given by $w: V\to \NN$,
and a minimum vertex cut is a vertex cut $C\subseteq V$ of minimum total weight
$\sum_{v\in C} w(v)$.
Our lower bounds easily extend to the unweighted setting, 
by replacing each vertex $v$ with $w(v)$ unweighted copies, 
i.e., an independent set $v_1,v_2,\ldots,v_{w(v)}$ of unweighted vertices having the same neighbors that $v$ had. 
If $v$ was a terminal, then all its copies $v_i$ become terminals.
This creates an unweighted graph $G_u=(V_u,E_u)$ with terminals $T_u$ such that 
$|V_u| = \sum_{v\in V}w(v)$ and $|T_u| = \sum_{t\in T}w(t)$. 

It can be easily verified that for every terminal minimum vertex cut $C$ in $G_u$ and every $v\in V \setminus T$, 
either all its copies are in $C$ or none of them, 
i.e., either $\{ v_1,\ldots,v_{w(v)}\} \subseteq C$ or $\{ v_1,\ldots,v_{w(v)}\} \cap C = \emptyset$. 
This means that all the copies of $v$ in $G_u$ act as a unit,
and guarantees that every minimum $(A,B)$-vertex-cut between $A,B \subseteq T$ in $G$ 
is simulated by a minimum $(A_u,B_u)$-vertex-cut in $G_u$, 
where $A_u = \bigcup_{v\in A} \set{v_1,\ldots,v_{w(v)}}$ 
and $B_u$ is defined similarly for $B$,
and vice versa. 

Our results only use small vertex weights, namely at most $4$,
and therefore extend to unweighted graphs with the same asymptotic bounds
on the number of terminals and vertices.

\subsection{Subgraph sparsifiers require $\Omega(k^2)$ vertices}

We start with a lower bound on the number of vertices in a sparsifier
that must be a subgraph of the input graph $G$.
It shows that our sparsifier in Theorem \ref{t:qbsparsifier},
which is a subgraph and has $O(k^2)$ edges (and vertices),
achieves an optimal size bound,
at least when using the technique of subgraph sparsification. 

\begin{theorem}\label{t:subgraphlowerboundweighted}
For every $k\in \NN $, there is a vertex weighted undirected bipartite graph $G_k=(T,N,E)$ with $k$ terminals and $w(v) \le 4$ for all $v\in N$, 
such that every vertex-cut sparsifier of $G$ which is a subgraph of $G$ 
must have $\Omega(k^2)$ vertices. 
\end{theorem}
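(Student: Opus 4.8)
The plan is to exhibit a single bipartite quasi-bipartite gadget that "packs" $\Theta(k^2)$ non-terminals, each of which is the \emph{unique} vertex implementing a particular minimum $(A,B)$-vertex-cut for a carefully chosen bipartition $(A,B)$ of $T$; since a subgraph sparsifier $G'$ must realize every such cut using vertices of $G$, and a vertex that is not $v$ cannot substitute for $v$ in $v$'s special cut, all these $\Theta(k^2)$ non-terminals must survive in $V'$. Concretely, I would take the terminal set $T$ and split it into two halves $T = L \cup R$ with $|L|=|R|=k/2$; for every pair $(i,j)$ with $i\in L, j\in R$ introduce a non-terminal $v_{ij}$ of weight $1$ adjacent to exactly $\ell_i$ and $r_j$ (so $G$ is bipartite with sides $T$ and $N=\{v_{ij}\}$, and $|N|=k^2/4$). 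In addition, to control the value of cuts, I would add a small "background" apparatus (a few heavy non-terminals of weight up to $4$, or equivalently weighted copies of $L$ and $R$) so that the cheapest way to separate $\{\ell_i\}\cup(\text{rest of }L)$ from $\{r_j\}\cup(\text{rest of }R)$ routes through $v_{ij}$.

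The key steps, in order, are: (1) Define the gadget precisely, including the auxiliary heavy vertices, and verify it is undirected, bipartite, and quasi-bipartite with all weights $\le 4$. (2) For each pair $(i,j)$, identify an explicit bipartition $A_{ij}\ni \ell_i$, $B_{ij}=T\setminus A_{ij}\ni r_j$ (e.g.\ $A_{ij}$ = everything except $r_j$, or a more symmetric choice) and compute $\mincut_G(A_{ij},B_{ij})$ exactly, showing the unique minimum vertex cut is $\{v_{ij}\}$ together with a fixed set of background vertices — in particular $v_{ij}$ lies in \emph{every} minimum $(A_{ij},B_{ij})$-cut. This is where the auxiliary weights do their job: they must be tuned so that deleting $v_{ij}$ is strictly cheaper than any alternative separating set, and so that no \emph{other} $v_{i'j'}$ can appear in a minimum $(A_{ij},B_{ij})$-cut. (3) Let $G'$ be any subgraph sparsifier. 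Since $\mincut_{G'}(A_{ij},B_{ij}) = \mincut_G(A_{ij},B_{ij})$ and $G'$ is a subgraph, a minimum $(A_{ij},B_{ij})$-cut $C'$ in $G'$ is a vertex set of $G$ of the same weight; by step (2), any set of $G$-vertices achieving that weight must contain $v_{ij}$, so $v_{ij}\in V'$. (4) Conclude $|V'| \ge |N| = \Omega(k^2)$, and finally invoke the weighted-to-unweighted reduction already described in the excerpt (weights $\le 4$) to get an unweighted bipartite quasi-bipartite graph with $\Theta(k)$ terminals and $\Omega(k^2)$ vertices.

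The main obstacle I anticipate is step (2): engineering the background weights so that, \emph{simultaneously for all $k^2/4$ pairs}, the only minimum $(A_{ij},B_{ij})$-cut within the vertex set of $G$ forces $v_{ij}$ in. The danger is that a clever cut could "cheat" by using several $v_{i'j'}$'s or parts of the background to separate $A_{ij}$ from $B_{ij}$ at equal cost; the design must be rigid enough to preclude this. I would handle it by making $\mincut_G(A_{ij},B_{ij})$ equal to (weight of background separator) $+1$, with the "$+1$" buyable only by deleting a single vertex adjacent to both the $\ell_i$-side and the $r_j$-side of the residual graph, and then arguing $v_{ij}$ is the only such vertex — e.g.\ by giving every $\ell_{i}$ and $r_{j}$ enough weight that they themselves are never in a minimum cut, so the only length-2 "bridge" between the $A_{ij}$ side and the $B_{ij}$ side left after removing the background is exactly $v_{ij}$. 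Once that uniqueness is nailed down, the rest (steps 3–4) is routine.
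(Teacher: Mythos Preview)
Your overall strategy --- pack $\Theta(k^2)$ weight-$1$ non-terminals and, for each, exhibit a terminal bipartition whose every minimum cut must use that vertex --- is exactly the paper's. But the decisive step~(2) is where all the content lies, and your proposal leaves it unspecified while the concrete suggestions you offer do not work: the bipartition $A_{ij}=T\setminus\{r_j\}$ does not depend on $i$ and hence cannot single out $v_{ij}$ among $v_{1j},\ldots,v_{k/2,j}$, and with the bare $L\times R$ gadget there are $\Theta(k^2)$ length-$2$ paths crossing any nontrivial bipartition, so a bounded-weight ``background'' that simultaneously forces each specific $v_{ij}$ is far from routine to design. The paper supplies the missing mechanism: take terminals $A=\{a_1,\ldots,a_k\}$ with $w(a_i)=2$, pendant terminals $D=\{d_1,\ldots,d_k\}$ with $w(d_i)=4$ and edges $\{a_i,d_i\}$, and weight-$1$ non-terminals $v_{ij}$ adjacent to $a_i,a_j$ for each unordered pair. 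The heavy pendant $d_i$ forces $a_i$ into \emph{every} minimum cut whenever $a_i$ and $d_i$ lie on opposite sides (cutting $a_i$ costs $2$, cutting $d_i$ costs $4$, and $a_i$ is the only way to reach $d_i$). Choosing $X_{ij}=\{a_i\}\cup(D\setminus\{d_j\})$ therefore forces all $a_x$ with $x\notin\{i,j\}$ into any minimum cut; after their removal the only surviving $X_{ij}$--$(T\setminus X_{ij})$ path is $a_i\text{--}v_{ij}\text{--}a_j$, and $v_{ij}$ is its unique cheapest vertex. This pendant-forcing trick is the idea your plan is missing.

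Your step~(3) also has a logical slip: a minimum $(A_{ij},B_{ij})$-cut $C'$ in $G'$ separates $A_{ij}$ from $B_{ij}$ in $G'$, not in $G$, so step~(2) does not apply to $C'$. The correct argument is the contrapositive the paper uses: if $v_{ij}\notin V'$ and $C$ is a minimum $(X_{ij},T\setminus X_{ij})$-cut in $G$ (so $v_{ij}\in C$ by step~(2)), then $(C\setminus\{v_{ij}\})\cap V'$ is an $(X_{ij},T\setminus X_{ij})$-cut in $G'$ of strictly smaller weight, contradicting $\mincut_{G'}=\mincut_G$.
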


\begin{proof}
We present a construction for $G_k=(T,N,E)$ for an arbitrary $k\in \NN$ ($N$ is the set of non-terminal vertices). We set $T$ as a union of two sets of $k$ terminals $ A = \{a_i | i \in [1\dd k] \}$ and $D = \{d_i | i\in [1 \dd k] \}$. For every $i\in [1\dd k]$, we connect $a_i$ and $d_i$ with $e_i = \{ a_i,d_i\}$ and set $w(a_i) = 2$ and $w(d_i) = 4$. Finally, for every $\{i,j\} \in \binom{k}{2}$, we add a non terminal $v_{ij}$ connected to $a_i$ and to $a_j$ with $w(v_{ij}) = 1$. For a visualization, see Figure \ref{fig:lowexmaple}

In Section \ref{ap:lemdremove}, we prove the following.

\begin{lemma}\label{l:dremove}
For every $i\in [1\dd k]$ and partition $X, T\setminus X$ of $T$ such that $a_i \in X$ and $d_i \in T\setminus X$, every minimum $(X,T\setminus X)$-cut must contain $a_i$.  
\end{lemma}

We proceed to show that for every $i,j \in {k\choose 2}$, there is a terminals minimum cut that requires the vertex $v_{i,j}$. 
\begin{lemma}\label{l:vijoncut}
Let $i,j \in {k \choose 2}$,
and let $X_{i,j} = \{a_i\} \cup (D\setminus \set{d_j})$.
Then every minimum $(X_{i,j}, T\setminus X_{i,j})$-cut in $G$ contains $v_{i,j}$. 
\end{lemma}

\begin{proof}
Fix $i,j$ and a minimum $(X_{i,j}, T \setminus X_{i,j})$-cut $C$. 
For all $x\neq i,j$, the terminals $d_x,v_x$ are on different sides of the cut,
and thus by Lemma \ref{l:dremove} $C$ must contain $a_x$. 

Now suppose we remove from $G$ the terminals $A' = \{ a_x| x\notin \{i,j\} \}$.
Having deleted these vertices, the path $a_{i}-v_{i,j}-a_{j}$ between $X_{i,j}$ and $T\setminus X_{i,j}$ remains in $G$.  Note that $C_{i,j} = A' \cup \set{v_{i,j}}$ is an $(X_{i,j}, T\setminus X_{i,j})$ vertex cut in $G$ with weight $w(C_{i,j}) = w(A') + 1$. It follows that $C$ may contain one vertex with weight at most $1$ in addition to $A'$, which forces it to include $v_{i,j}$ from the path $a_i - v_{i,j} - a_j$.
The lemma follows. 
\end{proof}

We conclude the proof of Theorem \ref{t:subgraphlowerboundweighted}
by showing that for every $i,j \in {k\choose 2}$, every subgraph vertex sparsifier $G'$ must contain $v_{i,j}$. 
Assume to the contrary that $G'$ is a subgraph of $G_k$ that does not contain $v_{i,j}$. Let $C$ be a minimum $(X_{i,j},T\setminus X_{i,j})$ vertex cut in $G_k$, with $X_{i,j}$ as defined in Lemma \ref{l:vijoncut}. According to Lemma \ref{l:vijoncut}, $v_{i,j} \in C$. Since $G'=(V',E')$ is a subgraph of $G_k$ that does not contain $v_{j,j}$, $C\setminus \{v_{i,j}\} \cap V'$ is an $(X_{i,j},T\setminus X_{i,j})$ vertex cut in $G'$. Since $|C'| < |C|$, we have $micut_{G'}(X_{i,j},T\setminus X_{i,j}) < \mincut_{G_k}(X_{i,j},T\setminus X_{i,j})$,
a contradiction to $G'$ being a minimum vertex-cut sparsifier. 
\end{proof}

\subsection{Sparsifiers require $\tilde{\Omega}(k^2)$ edges} 

In this section, we present our lower bound for the \emph{size} of an arbitrary sparsifier. 
It implies a lower bound on the number of edges in a sparsifier, 
but it has a broader conclusion. 
Informally, we prove that regardless of the method that one uses to represent a graph, a vertex-cut sparsifier of an undirected quasi-bipartite graph requires $\Omega(k^2)$ bits of representation. 
Formally, we prove the following.

\begin{theorem}\label{t:lowergeneral}
For every form of representing graphs using bits, and for every $k\in \NN $, there is a vertex-weighted undirected quasi-bipartite graph $G$ with $k$ terminals, 
such that every vertex-cut sparsifier of $G$ must have $|G'| = \Omega(k^2)$ bits.
\end{theorem}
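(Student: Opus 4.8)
The plan is to prove an information-theoretic lower bound by a counting/incompressibility argument: exhibit a large family $\mathcal{G}$ of quasi-bipartite graphs on the same terminal set $T$ (with $|T|=k$ and bounded vertex weights) that are pairwise \emph{distinguishable} by the vertex-cut function $\mincut(A,B)$, so that any valid sparsifier must also distinguish them. Then any encoding scheme that maps a graph to its sparsifier's bit-representation must be injective on $\mathcal{G}$ (two graphs with different cut functions cannot share a sparsifier, since the sparsifier determines all the values $\mincut(A,B)$), hence some graph in $\mathcal{G}$ has a sparsifier of size $\ge \log_2 |\mathcal{G}|$ bits. It therefore suffices to build $\mathcal{G}$ with $\log_2|\mathcal{G}| = \tilde\Omega(k^2)$, i.e.\ $|\mathcal{G}| = 2^{\tilde\Omega(k^2)}$.

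The natural candidate family is built on the same gadget used in Theorem~\ref{t:subgraphlowerboundweighted}: terminals $A=\{a_i\}$, $D=\{d_i\}$ with the forcing edges $e_i=\{a_i,d_i\}$ and appropriate weights, and for each pair $\{i,j\}$ an \emph{optional} non-terminal $v_{ij}$ of weight $1$ joined to $a_i$ and $a_j$. A graph in the family is obtained by selecting a subset $S\subseteq\binom{[k]}{2}$ of which $v_{ij}$-gadgets to include; this gives $2^{\binom{k}{2}}$ graphs. The key step is to show these are pairwise distinguishable: by (the reasoning behind) Lemma~\ref{l:vijoncut}, for the partition $X_{i,j}=\{a_i\}\cup(D\setminus\{d_j\})$, the value $\mincut(X_{i,j},T\setminus X_{i,j})$ drops by exactly one precisely when the gadget $v_{ij}$ is present (it supplies the one extra cheap vertex needed on the path $a_i-v_{ij}-a_j$, versus cutting $a_i$ or $a_j$ at higher cost). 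Hence from the cut function one can read off $S$ bit-by-bit, so distinct $S$ yield distinct cut functions. This immediately gives a lower bound of $\binom{k}{2} = \Omega(k^2)$ bits; the stated $\tilde\Omega(k^2)$ (rather than a clean $\Omega(k^2)$) presumably reflects a minor loss somewhere --- e.g.\ if one wants the construction to have exactly $k$ terminals after some reduction, or wants the unweighted version's terminal count to match --- so I would track constants/log factors carefully but not worry about optimizing them.

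Concretely the steps are: (1) fix the weighted quasi-bipartite template and the family $\mathcal{G}=\{G_S : S\subseteq\binom{[k]}{2}\}$; (2) prove a ``detection lemma'' — for each pair $\{i,j\}$, $\mincut_{G_S}(X_{i,j},T\setminus X_{i,j})$ takes one value if $\{i,j\}\in S$ and a strictly different value otherwise — reusing Lemma~\ref{l:dremove} to force all $a_x$, $x\ne i,j$, into the cut and then comparing the two candidate completions; (3) conclude pairwise-distinguishability of the cut functions, hence of sparsifiers: if $G_S$ and $G_{S'}$ had a common sparsifier $G'$, then $\mincut_{G_S}(\cdot,\cdot)=\mincut_{G'}(\cdot,\cdot)=\mincut_{G_{S'}}(\cdot,\cdot)$, forcing $S=S'$; (4) incompressibility: fix any bit-encoding of graphs; the map $G_S\mapsto(\text{bit-string of its sparsifier})$ must be injective on $\mathcal{G}$, so by pigeonhole some $G_S$ needs a string of length $\ge \log_2|\mathcal{G}| = \Omega(k^2)$ bits; (5) transfer to unweighted graphs via the vertex-splitting reduction already described in the preliminaries, noting weights are $O(1)$ so $k$ changes only by a constant factor.

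The main obstacle is step (2): I must verify that adding or removing a single weight-$1$ vertex $v_{ij}$ changes exactly the one targeted min-cut value by exactly $1$, and that no other interaction between the $v$-gadgets spoils this (they are independent non-terminals each of weight $1$, so one expects clean additivity, but one has to check that in the presence of other gadgets the minimum cut for partition $X_{i,j}$ still doesn't find a cheaper route, and that it still can't avoid paying for $v_{ij}$ when present). The weights $w(a_i)=2$, $w(d_i)=4$ are chosen so that cutting $d_i$ is never worthwhile and cutting $a_i$ costs $2>1$; I would re-derive Lemma~\ref{l:vijoncut}'s computation in this multi-gadget setting to pin down the exact before/after values and confirm the gap is $1$. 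Everything else --- the injectivity argument and the weighted-to-unweighted transfer --- is routine.
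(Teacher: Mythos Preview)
Your proposal is essentially the paper's own proof: the paper likewise takes all $2^{\binom{k}{2}}$ subgraphs of $G_k$ obtained by including/excluding each $v_{ij}$, proves the detection lemma (their Lemma~\ref{l:bdecidescut}: $\mincut(X_{i,j},T\setminus X_{i,j})=2k-4$ iff $v_{ij}$ is absent, and strictly larger otherwise), and then applies exactly the injectivity/counting argument you describe. One minor slip in your exposition: the min-cut value \emph{increases} by $1$ when $v_{ij}$ is present (you must pay to sever the extra path $a_i\text{--}v_{ij}\text{--}a_j$), not drops --- but this is immaterial to the distinguishability you need.
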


Here, $|G'|$ stands for the number of bits in the representation of $G'$. 
If we consider a standard representation of a graph as a list of vertices and edges, 
where every edge is represented using $2\log|V|$ bits, we obtain the following.

\begin{corollary}\label{cor:lowerboundedges}
For every $k\in \NN$, there is a vertex-weighted quasi-bipartite graph $G$ with $k$ terminals, for which every vertex-cut sparsifier $G'=(V',E')$ 
must have $|E'| = \Omega(k^2/\log k)$.
\end{corollary}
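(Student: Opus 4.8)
The plan is to encode an arbitrary bipartite-type object on $\Theta(k^2)$ bits into a quasi-bipartite graph in such a way that the vertex-cut sparsifier is forced to retain all of that information. Concretely, I would fix $m=\Theta(k)$ and, for each subset $S\subseteq\binom{[m]}{2}$ (equivalently, each graph $H$ on $m$ vertices), build a vertex-weighted quasi-bipartite graph $G_H$ on a common terminal set $T$ of size $k=\Theta(m)$, using the gadget of Theorem~\ref{t:subgraphlowerboundweighted} as a template: terminals $a_1,\dots,a_m$ (weight $2$) and $d_1,\dots,d_m$ (weight $4$) with edges $\{a_i,d_i\}$, and for each pair $\{i,j\}\in S$ a non-terminal $v_{ij}$ of weight $1$ adjacent to $a_i$ and $a_j$ — but \emph{only} for pairs in $S$, so that the presence or absence of $v_{ij}$ is precisely the bit indicating whether $\{i,j\}\in S$. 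The key claim, a direct analogue of Lemmas~\ref{l:dremove} and~\ref{l:vijoncut}, is that for the partition $X_{i,j}=\{a_i\}\cup(D\setminus\{d_j\})$, the value $\mincut_{G_H}(X_{i,j},T\setminus X_{i,j})$ differs depending on whether the pair $\{i,j\}$ is in $S$: when $v_{ij}$ is present one must pay an extra unit to cut the path $a_i - v_{ij} - a_j$, and when it is absent no such path exists so the cut is cheaper by $1$. Thus the bipartition-cut function of $G_H$ determines $S$ exactly.

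The second ingredient is a counting / information-theoretic argument. Since a vertex-cut sparsifier $G'$ of $G_H$ preserves $\mincut(A,T\setminus A)$ for every bipartition $A$ of $T$, and since these values determine $S$, two distinct graphs $H\neq H'$ must yield sparsifiers with distinct bipartition-cut functions, hence $G'$ must be a distinct bit-string for each of the $2^{\binom{m}{2}}=2^{\Theta(k^2)}$ choices of $H$. A set of $2^{\Theta(k^2)}$ distinct bit-strings must contain one of length $\Omega(k^2)$; more carefully, by a pigeonhole/entropy argument, for \emph{every} fixed encoding scheme the worst-case (indeed typical) sparsifier among the family $\{G_{H}\}_H$ needs $\Omega(k^2)$ bits. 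This is exactly the statement of the theorem, and Corollary~\ref{cor:lowerboundedges} then follows by observing that a graph with $|E'|$ edges is described by $O(|E'|\log|V'|)$ bits, and $|V'|$ can be assumed polynomially bounded in $k$ (a sparsifier with $|V'|\ge 2^{\omega(k^2)}$ already has $\Omega(k^2)$ bits trivially, so we may restrict attention to smaller ones), giving $|E'|=\Omega(k^2/\log k)$.

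The steps, in order: (1) define the family $\{G_H\}_H$ on a shared terminal set $T$, $|T|=k=\Theta(m)$, with weights at most $4$; (2) prove the cut-value dichotomy — replay Lemma~\ref{l:dremove} (the forced inclusion of $a_i$ whenever $a_i,d_i$ are separated, using $w(d_i)=4>w(a_i)=2$ so it is never worth cutting $d_i$) and then show $\mincut_{G_H}(X_{i,j},T\setminus X_{i,j})=w(A')+[\,\{i,j\}\in S\,]$ where $A'=\{a_x : x\notin\{i,j\}\}$; (3) conclude that the map $H\mapsto(\text{bipartition-cut function of }G_H)$ is injective, hence $H\mapsto(\text{any valid sparsifier }G'_H)$, composed with the encoding, is injective; (4) run the counting argument to extract the $\Omega(k^2)$-bit bound, and then derive the edge-count corollary.

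The main obstacle I anticipate is step~(2): making the cut-value dichotomy airtight. One must argue that in $G_H$ the cheapest $(X_{i,j},T\setminus X_{i,j})$-cut really does look like "$A'$ plus at most one extra weight-$1$ vertex," which requires ruling out cuts that use the $a_x$'s or $d_x$'s cleverly, or that exploit other non-terminals $v_{x\ell}$ to route around. Because every non-terminal has degree $2$ and weight $1$ while the relevant terminals have weight $2$ or $4$, a local exchange argument should work — any cut touching $v_{x\ell}$ instead of $a_x$ when $d_x$ is separated can be rerouted at no greater cost — but the bookkeeping over all pairs simultaneously, and the fact that the separated pairs depend on the chosen bipartition $X_{i,j}$, is the delicate part and is presumably what the deferred proof of Theorem~\ref{t:lowergeneral} (and its Lemma~\ref{l:dremove}-style sublemma) spells out in detail. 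A secondary, minor technical point is choosing $m$ versus $k$ so that $|T|$ is exactly $k$ rather than $\Theta(k)$; padding $T$ with isolated terminals handles this without changing any cut value.
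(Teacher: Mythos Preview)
Your proposal is correct and follows essentially the same route as the paper: the family $\{G_H\}$ you build is exactly the paper's family $\mathcal{G}_k=\{G_k^B\}_{B\subseteq N}$ (with your $S$ being the complement of their $B$), your cut-value dichotomy is their Lemma~\ref{l:bdecidescut}, and the injectivity-plus-counting argument is identical. The concern you flag in step~(2) turns out to be easier than you anticipate: once Lemma~\ref{l:dremove} forces $A'\subseteq C$, removing $A'$ disconnects every non-terminal $v_{x\ell}$ with $x\notin\{i,j\}$ or $\ell\notin\{i,j\}$ from at least one side, so the only surviving path is $a_i\text{--}v_{ij}\text{--}a_j$ and no global bookkeeping over other pairs is needed.
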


We define the minimum cut vector of a graph.

\begin{definition}[minimum cut vector]
The minimum cut vector $V_G$ of 
a graph $G=(V,E)$ with terminals $T\subseteq V$
is a vector with $2^{|T|}$ entries. 
The entries of $V_G$ correspond to different subsets $A \subseteq T$,
and their value is $V_G[A] = \mincut_G(A,T\setminus A)$.
\end{definition}

Clearly, if two graphs $G_1$ and $G_2$ have different minimum cut vectors $V_{G_1} \neq V_{G_2}$, then they must have a different vertex cut sparsifier. We prove Theorem \ref{t:lowergeneral} by constructing a large family of graphs with pairwise disjoint minimum cut vectors. Formally, we prove the following.

\begin{lemma}\label{l:graphfamily}
For every $k\in \NN$, there is a family of undirected quasi bipartite graphs $\mathcal{G}_k$ with $|\mathcal{G}_k| = 2^k$ such that every $G\in \mathcal{G}_k$ has $O(k)$ terminals, every two different graphs $G_1,G_2 \in G_k$ have $V_{G_1} \neq V_{G_2}$
\end{lemma}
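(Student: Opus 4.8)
\textbf{Proof plan for Lemma \ref{l:graphfamily}.}

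The plan is to encode an arbitrary bit-string of length $k$ into a quasi-bipartite graph in such a way that the minimum-cut vector records enough information to recover the string. Concretely, I would start from a fixed ``skeleton'' graph on $O(k)$ terminals and, for each bit $s_\ell$ of a string $s \in \{0,1\}^k$, either add or omit a single cheap non-terminal gadget (a weight-$1$ vertex attached to a designated pair of terminals, much as the vertices $v_{i,j}$ in Theorem \ref{t:subgraphlowerboundweighted}). This gives a family $\mathcal{G}_k$ of exactly $2^k$ graphs, each with $O(k)$ terminals and each quasi-bipartite by construction. It remains to argue that $s \mapsto V_{G_s}$ is injective, which amounts to exhibiting, for every coordinate $\ell$, a bipartition $(A_\ell, T\setminus A_\ell)$ of the terminals whose minimum cut value in $G_s$ differs according to whether the $\ell$-th gadget is present.

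The key step is therefore to choose the skeleton and the gadget-attachment pattern so that the presence of the $\ell$-th gadget changes exactly one coordinate of the cut vector and does so by a controlled amount (say by $+1$). Following the idea behind Lemma \ref{l:vijoncut}, I would set up the skeleton with weighted terminal pairs $a_\ell, d_\ell$ (and auxiliary weights, all at most a small constant) so that for the bipartition $A_\ell$ that places $a_\ell$ on one side and $d_\ell$ on the other while forcing all ``off-coordinate'' terminals to behave rigidly, the only cheap way to cut is through the $\ell$-th gadget vertex. When that vertex is present, the minimum cut picks it up at cost $1$; when it is absent, the cut must instead pay the full weighted edge $e_\ell = \{a_\ell, d_\ell\}$ or some strictly more expensive alternative, so the cut value strictly increases. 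Hence $V_{G_s}[A_\ell]$ determines $s_\ell$, for each $\ell$ independently, and the map is injective.

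The main obstacle will be controlling interference between coordinates: I must ensure that the chosen bipartition $A_\ell$ isolates the effect of the $\ell$-th gadget regardless of which other gadgets are present or absent, so that the argument of Lemma \ref{l:vijoncut} (which deletes the ``irrelevant'' $a_x$'s and bounds the remaining budget) goes through uniformly over the whole family $\mathcal{G}_k$. This is exactly why the weights must be assigned carefully — heavy enough that the minimum cut is forced through the intended structure, light enough to stay within the constant-weight regime needed for the reduction to unweighted graphs in Section \ref{sec:lower}. Once the weighting is pinned down, the injectivity proof is a coordinate-by-coordinate repetition of the Lemma \ref{l:dremove}/Lemma \ref{l:vijoncut} analysis, and combined with the observation that distinct cut vectors force distinct sparsifiers, Lemma \ref{l:graphfamily} — and with it Theorem \ref{t:lowergeneral} — follows by a counting argument ($2^k$ graphs, each needing a distinct sparsifier, so some sparsifier needs $\Omega(\log 2^k) = \Omega(k)$... ) — here I would in fact need the sharper accounting that a \emph{single} graph's sparsifier must have $\Omega(k^2)$ bits, which comes from taking $k$ in the lemma to be a parameter like $\Theta(m^2)$ relative to the true terminal count $m$, so that $\log|\mathcal{G}_{m}|$-style bounds yield the claimed $\tilde\Omega(k^2)$.
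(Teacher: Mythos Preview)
Your high-level approach --- encode bits via optional weight-$1$ non-terminals and decode each bit via a dedicated bipartition, reusing the analysis of Lemmas~\ref{l:dremove} and~\ref{l:vijoncut} --- is exactly what the paper does. The paper takes the graph $G_k$ from Theorem~\ref{t:subgraphlowerboundweighted} and forms $\mathcal{G}_k$ as all $2^{\binom{k}{2}}$ induced subgraphs obtained by deleting an arbitrary subset $B$ of the $\binom{k}{2}$ non-terminals $v_{i,j}$; it then shows (Lemma~\ref{l:bdecidescut}) that the cut value at the bipartition $X_{i,j}$ equals $2k-4$ if and only if $v_{i,j}$ was deleted, independently of all other coordinates.

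The gap in your plan is the \emph{density of gadgets per terminal}. You propose one gadget and one dedicated terminal pair $(a_\ell, d_\ell)$ per bit $\ell$, which yields only $2^k$ graphs on $\Theta(k)$ terminals and hence only an $\Omega(k)$-bit lower bound. Your suggested fix --- reparametrize by taking the lemma's $k$ to be $\Theta(m^2)$ for true terminal count $m$ --- does not help: with one disjoint pair per bit you would then have $\Theta(m^2)$ terminals encoding $\Theta(m^2)$ bits, still only linear in the number of terminals. The paper avoids this by \emph{reusing} terminals across gadgets: the $\binom{k}{2}$ vertices $v_{i,j}$ all attach to the same $k$ anchors $a_1,\ldots,a_k$, so $\Theta(k^2)$ independent bits are packed into $O(k)$ terminals. (Indeed, the lemma statement as printed has a typo; the proof and the surrounding text make clear that $|\mathcal{G}_k| = 2^{\Theta(k^2)}$, not $2^k$, and this quadratic density is precisely what you are missing.) The non-interference you correctly flag as the main obstacle is then handled by the $X_{i,j}$-bipartition argument: it forces every $a_x$ with $x\notin\{i,j\}$ into the cut (Lemma~\ref{l:dremove}), leaving a residual budget that can only be spent on $v_{i,j}$ itself --- regardless of which other $v_{i',j'}$ are present.
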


The existence of $\mathcal{G}_k$ with these properties yields Theorem \ref{t:lowergeneral} via the following reasoning. 
Let $MVC:\mathcal{G}_k \to \mathcal{G}$ be a function that maps every graph to its vertex-cut sparsifier with minimal representation size. 
Since $MVC$ is injective, it has  $|\mathcal{G}_k| = 2^{k^2}$ different output. 
It follows that one of the outputs must be represented using $\Omega(k^2)$ bits.
We are left with the task of proving Lemma \ref{l:graphfamily}.

\begin{proof} 
We start by defining $\mathcal{G}_k$. Consider $G_k=(V,N,E)$ from the proof of Theorem \ref{t:subgraphlowerboundweighted}. $\mathcal{G}_k$ consists of $2^{\Theta(k^2)}$ subgraphs of $G_k$ defined as follows. Recall that $N = \{v_{i,j} | i,j \in { k \choose 2} \}$. For every subset $B \subseteq N$, we define the subgraph $G^B_k$ to be the subgraph of $G_k$ induced by the vertices $T \cup N \setminus B$. We set $\mathcal{G}_k = \bigcup_{B\subseteq N}G^B_k$. Clearly, $|\mathcal{G}_k| = 2^{\Theta(k^2)}$. We proceed to prove that every two graphs in $\mathcal{G}_k$ have different minimum cut vectors.

We make the following claim.
\begin{lemma}\label{l:bdecidescut}
Let $X_{i,j}$ be as in the proof of Lemma \ref{l:vijoncut} 
and let $B \subseteq N$. 
Then $\mincut_{G^B_k}(X_{i,j},T\setminus X_{i,j}) = 2k-4$ if and only if $v_{i,j} \in B$.
\end{lemma}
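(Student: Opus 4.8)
\textbf{Proof plan for Lemma~\ref{l:bdecidescut}.} The plan is to compute $\mincut_{G^B_k}(X_{i,j}, T\setminus X_{i,j})$ exactly in both cases and read off the stated equivalence. Throughout write $X = X_{i,j} = \{a_i\} \cup (D\setminus\{d_j\})$, so that $T\setminus X = \{d_i\} \cup \{a_x : x\ne i\}$, and recall from the construction that $w(a_x)=2$, $w(d_x)=4$, and $w(v_{x,y})=1$. First I would establish a clean upper bound that holds \emph{regardless} of $B$: the set $C_0 = \{a_x : x\ne i,j\} \cup \{v_{i,j}\}$ (intersected with the present vertices) separates $X$ from $T\setminus X$ whenever $v_{i,j}$ is present, and has weight $2(k-2)+1 = 2k-3$; when $v_{i,j}$ is absent, $\{a_x : x\ne i,j\}$ alone already separates $X$ from $T\setminus X$ in $G^B_k$, with weight $2(k-2) = 2k-4$. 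So the ``if'' direction is immediate: if $v_{i,j}\in B$ the cut value is at most $2k-4$, and Lemma~\ref{l:dremove} gives the matching lower bound $2k-4$ since for each $x\ne i,j$ the pair $a_x \in X$, $d_x\in T\setminus X$ forces $a_x$ (or an equally-heavy substitute) into any minimum cut — here I would invoke the submodular/uncrossing argument behind Lemma~\ref{l:dremove} to argue the $k-2$ constraints are ``independent'' and sum to $2(k-2)$.

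For the ``only if'' direction I would show that when $v_{i,j}\notin B$ (i.e.\ $v_{i,j}$ is present in $G^B_k$), the mincut is strictly larger than $2k-4$; combined with the universal upper bound $2k-3$ this pins it to exactly $2k-3$. By Lemma~\ref{l:vijoncut} — whose proof only used the path $a_i - v_{i,j} - a_j$ and the weights, both preserved in the induced subgraph $G^B_k$ as long as $v_{i,j}$ survives — every minimum $(X, T\setminus X)$-cut $C$ of $G^B_k$ contains $v_{i,j}$. Moreover the same argument shows $C$ contains $a_x$ for every $x\ne i,j$ (Lemma~\ref{l:dremove} applied to the $d_x$ pairs, which are unaffected by deleting some $v$'s). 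Hence $w(C) \ge 2(k-2) + w(v_{i,j}) = 2k-4+1 = 2k-3 > 2k-4$, as desired.

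The step I expect to require the most care is making the lower bound $w(C)\ge 2(k-2)$ on the $a_x$-contributions rigorous and \emph{simultaneously} with the $v_{i,j}$ contribution — i.e.\ confirming that Lemma~\ref{l:dremove} can be applied in parallel across all $k-2$ indices $x\notin\{i,j\}$ without the forced vertices overlapping or one ``paying for'' another, and that it still applies after deleting the vertices of $B$ (which only removes some $v$'s, never any $a_x$ or $d_x$, so the $d_x$–$a_x$ edge and its incident structure is intact). Once that bookkeeping is in place, the two computed values $2k-4$ (when $v_{i,j}\in B$) and $2k-3$ (when $v_{i,j}\notin B$) are distinct, which is exactly the claimed equivalence; and since $B\subseteq N$ is determined coordinate-by-coordinate by these mincut values, this simultaneously furnishes the injectivity needed for Lemma~\ref{l:graphfamily}. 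I would close by remarking that the $2^{\binom{k}{2}} = 2^{\Theta(k^2)}$ subsets $B$ thus give pairwise-distinct minimum cut vectors on $O(k)$ terminals.
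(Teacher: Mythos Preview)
Your approach is the same as the paper's: use Lemma~\ref{l:dremove} to force $A' = \{a_x : x \ne i,j\}$ into every minimum $(X_{i,j},T\setminus X_{i,j})$-cut (contributing weight $2(k-2)=2k-4$), and then observe that in $G^B_k \setminus A'$ the only possible connection between $X_{i,j}$ and $T\setminus X_{i,j}$ is the path $a_i\,\text{--}\,v_{i,j}\,\text{--}\,a_j$, which survives exactly when $v_{i,j}\notin B$. Two bookkeeping slips to fix before you write it up: $T\setminus X_{i,j} = \{d_j\}\cup\{a_x:x\ne i\}$ (you wrote $d_i$), and for $x\ne i,j$ the sides are $d_x\in X_{i,j}$ and $a_x\in T\setminus X_{i,j}$, not the reverse; also, no submodular or uncrossing argument is needed for the lower bound --- the forced vertices $a_x$ are simply pairwise distinct, so their weights add to $2(k-2)$ without further justification.
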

\begin{proof}
 Due to the same reasoning as in the proof of Lemma \ref{l:dremove}, the minimum $(X_{i,j},T\setminus X_{i,j})$ vertex cut in $G^B_k$ must contain $A'=\{a_x| x\notin \{i ,j\}\}$ with total weight $2k-4$. In $G_k \setminus A'$, $X_{i,k}$ and $T\setminus X_{i,j}$ are connected via the path between $P=a_i,v_{i,j},b_j$, and $A' \cup \set{v_{i,j}}$ is a minimum $(X_{i,j},T\setminus X_{i,j})$ vertex cut.  It follows that if $v_{i,j} \in B$, $v_{i,j}$ is not in $G^B_k$ and $A'$ is a minimum  $(X_{i,j},T\setminus X_{i,j})$ vertex cut in $G^B_k$. Otherwise, if $v_{i,j} \notin B$, $A'$ is not a vertex cut in $G^B_k$. Since every minimum cut must contain $A'$, we have $mincut_{G^B_k}(X_{i,j},T\setminus X_{i,j}) > 2k-4$ in this case.
\end{proof}

We are ready to prove the minimum vertex cut disjointness property of the graphs in $\mathcal{G}_k$. For every two different $B_1,B_2 \subseteq N$, there is at least one vertex $v_{i,j}$ s.t. $v_{i,j} \in B_1$ and $v_{i,j} \notin B_2$ (or vice versa). It follows from Lemma \ref{l:bdecidescut} that $V_{G^{B_1}_k}[X_{i,j}] \neq V_{G^{B_2}_k} [X_{i,j}]$.
\end{proof}

This concludes the proof of Theorem \ref{t:infolowerbound}.
\bibliographystyle{alphaurl}
\bibliography{robi,Sparse}

\appendix

\section{Proof of Lemma \ref{l:dremove}}\label{ap:lemdremove}
\begin{proof}
Let $C \subseteq T \cup V$ be a minimum $(X , T \setminus X)$-cut,
and assume for contradiction that $a_i \notin C$.
Since $a_i$ and $d_i$ are in different sides of the cut and are connected by an edge, we must have $d_i \in C$. Note that removing $a_i$ from $G_k$ disconnects $d_i$ from the rest of the vertices of $G_k$. Therefore, $C^+ = (C \setminus \{ d_i\} ) \cup \{ a_i\}$ is also an $(X, T \setminus X)$-cut. This is a contradiction to the minimality of $C$, as the weight of $C^+$ is at most $|C| - 2$. 
\end{proof}

\section{Figures}
\begin{figure}[htpb]
  \centering
  \scalebox{0.8}{ \input{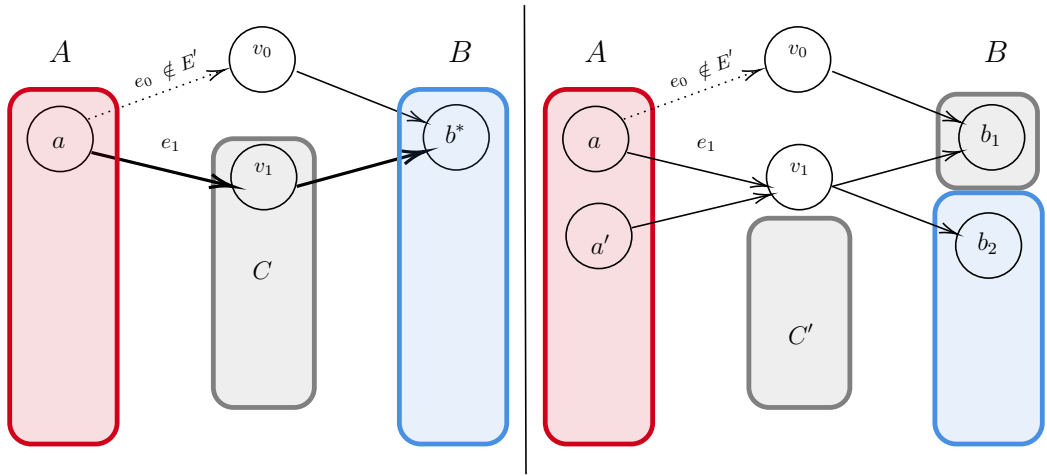}}
  \caption{The first steps of our sparsifier construction.
    The bipartite graph $G$ is illustrated as a tripartite graph,
    where the left and right groups of vertices are the terminal subsets $A$ and $B$,
    and the middle group is the non-terminals.
    \textbf{Left:} The case $i=1$ in the proof of Lemma \ref{lem:sparsifierSimple}.
    $C$ is not an $(A,B)$-vertex-cut in $G$, so there must be a path connecting $a\in A$ and $b_1 \in B$ in $G$ that bypasses $C$ via $e_0 \notin E'$. This suggests the existence of $e_1$ in $G'$ that was matched with $(a,b_1)$.
    \textbf{Right:} The step $i=2$ in our construction. The grey area represents $C'$. Since $C'$ is not an $(A,B)$ cut in $G'$, there is a path from $a'\in A$ to $b_2 \in B \setminus b_1$ in $G'$. Since $C$ \textit{is} an $(A,B)$ cut in $G'$, this path must include $v_1$.}
  \label{fig:step0}
\end{figure}

\begin{figure}
    \centering
    \scalebox{0.8}{ \input{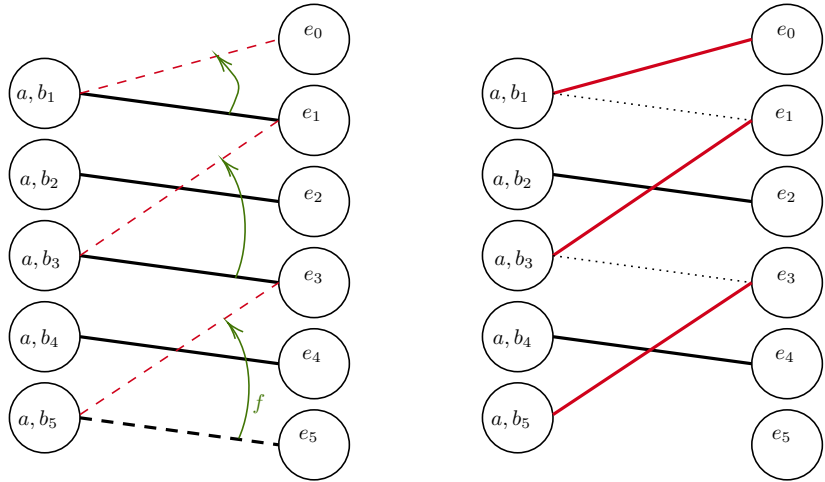}}
    \caption{An illustration of the construction of $M'$ from $M$. \textbf{Left:} The matching $M$ is represented by the bold black edges. Note that $\set{(a,b_5),e_5}$ is not in $M$, as we assume that $(a,b_5)$ is not matched. The red dotted edges are the edges of $\NewSet$, and the green arrows represent the values of $f(5)$, $f(3)$, and $f(1)$. Note that $f$ associates every pair $a,b_x$ with an 'higher' (or 'earlier') edge in the sequence that can be matched to $(a,b_x)$. \textbf{Right:} The matching $M'$, obtained by replacing the edges of $\OldSet$ (dotted thin edges) with the edges of $\NewSet$ (bold, red edges).}
    \label{fig:mmprime}
\end{figure}

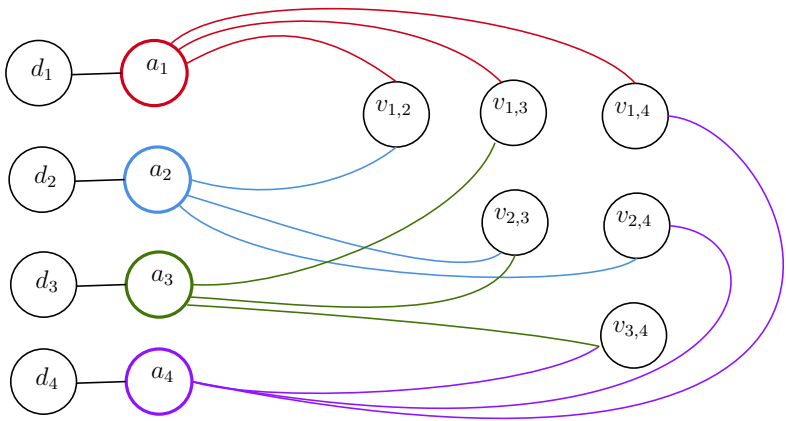
\begin{figure}
    \centering
    \scalebox{0.8}{ \tikzset{every picture/.style={line width=0.75pt}} %set default line width to 0.75pt        

\begin{tikzpicture}[x=0.75pt,y=0.75pt,yscale=-1,xscale=1]
%uncomment if require: \path (0,300); %set diagram left start at 0, and has height of 300

%Shape: Circle [id:dp2541024562353791] 
\draw  [color={rgb, 255:red, 208; green, 2; blue, 27 }  ,draw opacity=1 ][line width=1.5]  (153,59.5) .. controls (153,48.18) and (162.18,39) .. (173.5,39) .. controls (184.82,39) and (194,48.18) .. (194,59.5) .. controls (194,70.82) and (184.82,80) .. (173.5,80) .. controls (162.18,80) and (153,70.82) .. (153,59.5) -- cycle ;
%Shape: Circle [id:dp1679255455964026] 
\draw  [color={rgb, 255:red, 74; green, 144; blue, 226 }  ,draw opacity=1 ][line width=1.5]  (155,126.5) .. controls (155,115.18) and (164.18,106) .. (175.5,106) .. controls (186.82,106) and (196,115.18) .. (196,126.5) .. controls (196,137.82) and (186.82,147) .. (175.5,147) .. controls (164.18,147) and (155,137.82) .. (155,126.5) -- cycle ;
%Shape: Circle [id:dp43416044829524947] 
\draw  [color={rgb, 255:red, 65; green, 117; blue, 5 }  ,draw opacity=1 ][line width=1.5]  (156,192.5) .. controls (156,181.18) and (165.18,172) .. (176.5,172) .. controls (187.82,172) and (197,181.18) .. (197,192.5) .. controls (197,203.82) and (187.82,213) .. (176.5,213) .. controls (165.18,213) and (156,203.82) .. (156,192.5) -- cycle ;
%Shape: Circle [id:dp40737139358232244] 
\draw  [color={rgb, 255:red, 144; green, 19; blue, 254 }  ,draw opacity=1 ][line width=1.5]  (156,253.5) .. controls (156,242.18) and (165.18,233) .. (176.5,233) .. controls (187.82,233) and (197,242.18) .. (197,253.5) .. controls (197,264.82) and (187.82,274) .. (176.5,274) .. controls (165.18,274) and (156,264.82) .. (156,253.5) -- cycle ;
%Shape: Circle [id:dp1192942936316177] 
\draw   (81,59.5) .. controls (81,48.18) and (90.18,39) .. (101.5,39) .. controls (112.82,39) and (122,48.18) .. (122,59.5) .. controls (122,70.82) and (112.82,80) .. (101.5,80) .. controls (90.18,80) and (81,70.82) .. (81,59.5) -- cycle ;
%Shape: Circle [id:dp6410986181944738] 
\draw   (83,126.5) .. controls (83,115.18) and (92.18,106) .. (103.5,106) .. controls (114.82,106) and (124,115.18) .. (124,126.5) .. controls (124,137.82) and (114.82,147) .. (103.5,147) .. controls (92.18,147) and (83,137.82) .. (83,126.5) -- cycle ;
%Shape: Circle [id:dp30090844240671766] 
\draw   (84,192.5) .. controls (84,181.18) and (93.18,172) .. (104.5,172) .. controls (115.82,172) and (125,181.18) .. (125,192.5) .. controls (125,203.82) and (115.82,213) .. (104.5,213) .. controls (93.18,213) and (84,203.82) .. (84,192.5) -- cycle ;
%Shape: Circle [id:dp6403376213165906] 
\draw   (84,253.5) .. controls (84,242.18) and (93.18,233) .. (104.5,233) .. controls (115.82,233) and (125,242.18) .. (125,253.5) .. controls (125,264.82) and (115.82,274) .. (104.5,274) .. controls (93.18,274) and (84,264.82) .. (84,253.5) -- cycle ;
%Straight Lines [id:da11053941441154524] 
\draw    (122,60.5) -- (153,59.5) ;
%Straight Lines [id:da13699220747724694] 
\draw    (124,127.5) -- (155,126.5) ;
%Straight Lines [id:da30723058834475303] 
\draw    (125,193.5) -- (156,192.5) ;
%Straight Lines [id:da020149566911616956] 
\draw    (125,254.5) -- (156,253.5) ;
%Shape: Circle [id:dp5422075766378165] 
\draw   (304,85.5) .. controls (304,74.18) and (313.18,65) .. (324.5,65) .. controls (335.82,65) and (345,74.18) .. (345,85.5) .. controls (345,96.82) and (335.82,106) .. (324.5,106) .. controls (313.18,106) and (304,96.82) .. (304,85.5) -- cycle ;
%Shape: Circle [id:dp7658910712590608] 
\draw   (377,84.5) .. controls (377,73.18) and (386.18,64) .. (397.5,64) .. controls (408.82,64) and (418,73.18) .. (418,84.5) .. controls (418,95.82) and (408.82,105) .. (397.5,105) .. controls (386.18,105) and (377,95.82) .. (377,84.5) -- cycle ;
%Shape: Circle [id:dp18070222831605287] 
\draw   (453,86.5) .. controls (453,75.18) and (462.18,66) .. (473.5,66) .. controls (484.82,66) and (494,75.18) .. (494,86.5) .. controls (494,97.82) and (484.82,107) .. (473.5,107) .. controls (462.18,107) and (453,97.82) .. (453,86.5) -- cycle ;
%Shape: Circle [id:dp2022247871030587] 
\draw   (378,153.5) .. controls (378,142.18) and (387.18,133) .. (398.5,133) .. controls (409.82,133) and (419,142.18) .. (419,153.5) .. controls (419,164.82) and (409.82,174) .. (398.5,174) .. controls (387.18,174) and (378,164.82) .. (378,153.5) -- cycle ;
%Shape: Circle [id:dp4919524617762494] 
\draw   (454,155.5) .. controls (454,144.18) and (463.18,135) .. (474.5,135) .. controls (485.82,135) and (495,144.18) .. (495,155.5) .. controls (495,166.82) and (485.82,176) .. (474.5,176) .. controls (463.18,176) and (454,166.82) .. (454,155.5) -- cycle ;
%Shape: Circle [id:dp0794156898607068] 
\draw   (452,224.5) .. controls (452,213.18) and (461.18,204) .. (472.5,204) .. controls (483.82,204) and (493,213.18) .. (493,224.5) .. controls (493,235.82) and (483.82,245) .. (472.5,245) .. controls (461.18,245) and (452,235.82) .. (452,224.5) -- cycle ;
%Curve Lines [id:da9869462350351548] 
\draw [color={rgb, 255:red, 208; green, 2; blue, 27 }  ,draw opacity=1 ]   (194,53.33) .. controls (239,28.17) and (277,29.33) .. (324.5,65) ;
%Curve Lines [id:da8813359041270645] 
\draw [color={rgb, 255:red, 208; green, 2; blue, 27 }  ,draw opacity=1 ]   (188,45.33) .. controls (228,15.33) and (347.5,21.67) .. (390,65.33) ;
%Curve Lines [id:da6404261069927542] 
\draw [color={rgb, 255:red, 208; green, 2; blue, 27 }  ,draw opacity=1 ]   (184,41) .. controls (227,0.33) and (433,20.33) .. (473.5,66) ;
%Curve Lines [id:da3389434082807039] 
\draw [color={rgb, 255:red, 74; green, 144; blue, 226 }  ,draw opacity=1 ]   (196,126.5) .. controls (246,141.5) and (297.5,128) .. (324.5,106) ;
%Curve Lines [id:da5767986758034715] 
\draw [color={rgb, 255:red, 74; green, 144; blue, 226 }  ,draw opacity=1 ]   (194,136.33) .. controls (244,151.33) and (363,194.33) .. (390,172.33) ;
%Curve Lines [id:da5809545309542248] 
\draw [color={rgb, 255:red, 74; green, 144; blue, 226 }  ,draw opacity=1 ]   (189,142.5) .. controls (236,191.33) and (447.5,198) .. (474.5,176) ;
%Curve Lines [id:da4507984453588516] 
\draw [color={rgb, 255:red, 144; green, 19; blue, 254 }  ,draw opacity=1 ]   (494,86.5) .. controls (576,93.33) and (686,354.33) .. (197,253.5) ;
%Curve Lines [id:da7969240627144887] 
\draw [color={rgb, 255:red, 144; green, 19; blue, 254 }  ,draw opacity=1 ]   (495,155.5) .. controls (577,162.33) and (548,321.33) .. (197,253.5) ;
%Curve Lines [id:da7385868015703292] 
\draw [color={rgb, 255:red, 144; green, 19; blue, 254 }  ,draw opacity=1 ]   (197,253.5) .. controls (247,268.5) and (411,260.33) .. (451,231.33) ;
%Curve Lines [id:da7819112100811423] 
\draw [color={rgb, 255:red, 65; green, 117; blue, 5 }  ,draw opacity=1 ]   (197,192.5) .. controls (262,196.33) and (368,151.33) .. (386,103.33) ;
%Curve Lines [id:da48326682059388526] 
\draw [color={rgb, 255:red, 65; green, 117; blue, 5 }  ,draw opacity=1 ]   (196,200.33) .. controls (261,204.17) and (380.5,222) .. (398.5,174) ;
%Curve Lines [id:da39983772116403093] 
\draw [color={rgb, 255:red, 65; green, 117; blue, 5 }  ,draw opacity=1 ]   (194,205.33) .. controls (259,209.17) and (394,219.33) .. (451,231.33) ;

% Text Node
\draw (167,49) node [anchor=north west][inner sep=0.75pt]  [font=\large] [align=left] {$\displaystyle a_{1}$};
% Text Node
\draw (169,116) node [anchor=north west][inner sep=0.75pt]  [font=\large] [align=left] {$\displaystyle a_{2}$};
% Text Node
\draw (170,182) node [anchor=north west][inner sep=0.75pt]  [font=\large] [align=left] {$\displaystyle a_{3}$};
% Text Node
\draw (170,243) node [anchor=north west][inner sep=0.75pt]  [font=\large] [align=left] {$\displaystyle a_{4}$};
% Text Node
\draw (95,49) node [anchor=north west][inner sep=0.75pt]  [font=\large] [align=left] {$\displaystyle d_{1}$};
% Text Node
\draw (97,116) node [anchor=north west][inner sep=0.75pt]  [font=\large] [align=left] {$\displaystyle d_{2}$};
% Text Node
\draw (98,182) node [anchor=north west][inner sep=0.75pt]  [font=\large] [align=left] {$\displaystyle d_{3}$};
% Text Node
\draw (98,243) node [anchor=north west][inner sep=0.75pt]  [font=\large] [align=left] {$\displaystyle d_{4}$};
% Text Node
\draw (310,75) node [anchor=north west][inner sep=0.75pt]  [font=\large] [align=left] {$\displaystyle v_{1,2}$};
% Text Node
\draw (383,74) node [anchor=north west][inner sep=0.75pt]  [font=\large] [align=left] {$\displaystyle v_{1,3}$};
% Text Node
\draw (459,76) node [anchor=north west][inner sep=0.75pt]  [font=\large] [align=left] {$\displaystyle v_{1,4}$};
% Text Node
\draw (384,143) node [anchor=north west][inner sep=0.75pt]  [font=\large] [align=left] {$\displaystyle v_{2,3}$};
% Text Node
\draw (460,145) node [anchor=north west][inner sep=0.75pt]  [font=\large] [align=left] {$\displaystyle v_{2,4}$};
% Text Node
\draw (458,214) node [anchor=north west][inner sep=0.75pt]  [font=\large] [align=left] {$\displaystyle v_{3,4}$};

\end{tikzpicture}}
    \caption{A demonstration of $G_4$. }
\end{figure} \label{fig:lowexmaple}

\end{document}